\newcommand{\defn}[1]{\emph{#1}}
\newcommand{\ialphabet}{\Sigma}
\newcommand{\tokenem}{\texttt{em}}
\newcommand{\EOS}{\texttt{EOS}}
\newcommand{\transformer}{\mathcal{T}}
\newcommand{\ACzero}{\text{AC}^0}
\newcommand{\TCzero}{\text{TC}^0}
\newcommand{\NCone}{\text{NC}^1}
\newcommand{\Logspace}{\text{L}}
\newcommand{\parity}{\textsc{PARITY}}
\newcommand{\N}{\mathbb{N}}
\newcommand{\Z}{\mathbb{Z}}
\newcommand{\R}{\mathbb{R}}
\newcommand{\CLTL}{\text{K}_t[\#]}
\newcommand{\vecT}{\textbf{t}}
\newcommand{\vecV}{\textbf{v}}
\newcommand{\vecW}{\textbf{w}}
\newcommand{\vecX}{\textbf{x}}
\newcommand{\vecY}{\textbf{y}}
\newcommand{\vecC}{\textbf{c}}
\newcommand{\LeftCount}[1]{\overleftarrow{\#}[#1]}
\newcommand{\RightCount}[1]{\overrightarrow{\#}[#1]}
\newcommand{\idim}{\texttt{iDim}}
\newcommand{\odim}{\texttt{oDim}}
\newcommand{\weight}{\texttt{wt}}
\newcommand{\softmax}{\texttt{softmax}}
\newcommand{\uha}{\texttt{uha}}
\newcommand{\aha}{\texttt{aha}}
\newif\ifdraft\drafttrue
\newcommand\al[1]{{\color{blue}
[#1 - \textbf{Anthony}]}}
\newcommand\lo[1]{{\color{red}
[#1 - \textbf{Luke}]}}
\newcommand\ms[1]{{\color{OliveGreen}
[#1 - \textbf{Max}]}}
\newcommand\todo[1]{}
\newcommand\al[1]{}
\newcommand\lo[1]{}
\newcommand\ms[1]{}
\newcommand{\Global}{\textbf{G}}
\newcommand{\F}{\textbf{F}}
\newcommand{\Mod}{\text{Mod}}
\newcommand{\ReLU}{\text{ReLU}}
\newcommand{\Parikh}{\mathcal{P}}
\begin{document}
\title{The Role of Logic and Automata in Understanding Transformers}
%
%\titlerunning{Abbreviated paper title}
% If the paper title is too long for the running head, you can set
% an abbreviated paper title here
%
\author{
Anthony~W. Lin\inst{1,2}\orcidID{0000-0003-4715-5096} \and Pablo Barcelo\inst{3}\orcidID{0000-0003-2293-2653}}
\authorrunning{Lin and Barcelo}
% First names are abbreviated in the running head.
% If there are more than two authors, 'et al.' is used.
%
\institute{
Max-Planck Institute for Software Systems, Kaiserslautern, Germany
%\email{awlin@mpi-sws.org}
\and
University of Kaiserslautern-Landau, Kaiserslautern, Germany \and 
Institute for Mathematical and Computational Engineering, Pontificia Universidad
    Cat\'{o}lica de Chile \& IMFD Chile \& CENIA Chile
%\email{\{abc,lncs\}@uni-heidelberg.de}
}
\maketitle              % typeset the header of the contribution
\begin{abstract}
The advent of transformers has in recent years led to powerful and revolutionary Large Language Models (LLMs). Despite this, our understanding on the capability of transformers is still meager. In this invited contribution, we recount the rapid progress in the last few years to the question of what transformers can do. In particular, we will see the integral role of logic and automata (also with some help from circuit complexity) in answering this question. We also mention several open problems at the intersection of logic, automata, verification and transformers.

\keywords{Transformers  \and Hard Attention \and LTL \and Regular Languages.}
\end{abstract}
\section{Introduction}

Recent years witnessed the unprecedented emergence of Large Language Models
(LLMs), which have revolutionized many aspects of our lives. LLMs are based on a
new neural network model called \emph{transformers}, which extends the classical
feed-forward neural network model via \emph{attention mechanisms} for handling
texts of arbitrary lengths. Unlike Recurrent Neural Networks (RNN) \cite{elman} --- which predated transformers by decades --- transformers have proven to be efficiently parallelizable and able to capture long-range dependencies better in practice. Despite the rapid adoption of transformers as a mainstream ML model, some limitations of the transformer model have only been understood in recent years. One good example of such a limitation is to perform \emph{counting} in a text, e.g., determine whether there is an even or an odd number of occurrences of a given token in a text. 

In recent years, subareas of theoretical computer science --- including logic,
automata, and circuit complexity --- have  featured in the rapid development of
the theory of expressivity of transformers (cf. \cite{transformers-survey}).
Such a connection has organically materialized because transformers are computational models that process texts (i.e., strings) and can be studied just like formal models such as finite-state automata, Turing machines, or logics like first-order and second-order logics on strings. Multiple formal models have been developed by varying the following aspects of transformers: attention mechanisms, positional encodings, precision, and the so-called ``chain of thoughts''. Guided by both theory building and experimentation, a picture on the expressive power of transformers has slowly emerged. Although this picture is to date incomplete, a respectable body of works have been produced in the so-called FLaNN (Formal Languages and Neural Networks) community, consisting of logicians, automata theorists, and computational linguists. 

\paragraph{Why this article?} This article has been written to recount \emph{some}
gems that have been discovered at the intersection  of logic, automata, circuit
complexity, and transformers. That is, we do not aim to be exhaustive. The
choices of materials are additionally based on our subjective
taste\footnote{Before working on FLaNN, the authors primarily researched in
logic, automata theory, automated reasoning, finite model theory, and
databases.}. The intended audience of the article includes researchers in logic,
automata, verification and programming languages.  In particular, we will
mention several open problems, which we believe are worth undertaking in the
next years. 

\paragraph{Highlight of key results.} In its simplest form, a transformer can be understood as a formal model that takes an input  \emph{text} (i.e. string) and outputs a {\em token} (i.e. letter). More formally, a transformer gives rise to a function $f: \ialphabet^* \to \ialphabet$, for some finite alphabet $\ialphabet$ of tokens. Moreover, one could think of $f$ as a family of formal languages $\{L_a\}_{a \in \ialphabet}$, where $L_a := \{ w \in \ialphabet^* : f(w) = a\}$. This connection underlines the bridge between formal languages and transformers:  one can simply study such formal languages $L_a$ generated (or recognized) by transformers. 

The first set of results in the paper concerns the expressivity of transformers with {\em unique hard attention} mechanisms (a.k.a. Unique Hard Attention Transformers, or simply UHAT). Such an attention mechanism --- which finds the leftmost value that maximizes the attention score --- is a simplification of \emph{softmax attention}, which is used in practice but has proven to be tricky to analyze in theory owing to the use of such real-valued functions as $e^x$.  
The first key result that we discuss in the paper is from \cite{BKLP24,masked-uhat}. It connects 
formal languages definable in various fragments of first-order logic over strings extended with all numerical predicates (equivalently, subclasses of the circuit complexity class $\ACzero$) and UHAT. In particular, the language
\[
\parity := \{ w \in \{a,b\} : |w|_a \equiv 0 \pmod{2} \}
\]
is well-known \cite{Ajtai83} not to be in $\ACzero$, therefore cannot be expressed by UHAT. We cover this in Section \ref{sec:uhat}.

The second set of results concerns the expressivity of transformers with {\em averaging hard attention} mechanisms (a.k.a. Average Hard Attention Transformers, or simply AHAT). Such an attention mechanism --- which averages all values that maximize the attention score (unlike simply taking the leftmost value) --- provides another approximation of practical transformers, which use softmax attention. In particular, AHAT is tightly connected to Linear Temporal Logic extended with counting and the circuit complexity class $\TCzero$.
We cover this in Section \ref{sec:ahat}

Finally, we discuss the limitations of both UHAT and AHAT as approximations of practical transformers. In particular, we consider a recent promising direction that restricts AHAT to uniform attention layers (i.e., each position receives the same amount of attention). The resulting model, called AHAT[U], appears to be a good approximation of softmax transformers. We also discuss the distinction between expressibility and trainability in Section \ref{sec:limitations}.

\paragraph{Precision.} Real-world transformers are implemented on a specific
hardware that allows fixed (bit-)precision and fixed memory. Of course, one can
allow more precision and more memory by upgrading the hardware. Therefore, 
researchers in the theory of transformers has adopted a more practical approach
by specifying different precision model on a transformer $\transformer$: 
\begin{enumerate}
    \item \emph{Fixed} precision: there is a constant $c$ on the allowed 
        number of bits for any computation performed by $\transformer$.
    \item \emph{Logarithmic} precision: the number of allowed bits in the
        computation of $\transformer$ on a string of length $n$ is $O(\log n)$.
    \item \emph{Polynomial} precision: the number of allowed bits in the
        computation of $\transformer$ on a string of length $n$ is $O(n^c)$ for
        some constant $c$.
    \item \emph{Rational} (resp. \emph{real}) precision: this means rational
        (resp. real) computation is allowed with an unbounded precision.
\end{enumerate}
Although the distinction is important, it overcomplicates an introductory
article. For these reasons, we will assume the last precision model, and simply
remark
that all of the mentioned results work also for polynomial precision (and often
also logarithmic precision).
%However, assuming a fully
%fixed precision and fixed memory reduces even Turing machines to finite-state
%automata, which does not reveal any new insight on the power of various
%computational models. In reality, one may add additional memory

\paragraph{Notation and assumed background.} We assume familiarity with 
standard results in logic and automata, and their connections to circuit
complexity. All required background could be found in the excellent book
\cite{libkin-book} by Libkin.
In particular, we will consider {\em star-free} languages (i.e. regular
languages generated by regular expressions that use concatenation, union,
complementation, but no Kleene star), and their equivalent formulation using
first-order logic over strings (i.e. over the embedding of strings as logical 
structures, e.g., $aba$ is encoded as the structure with universe $\{1,2,3\}$,
the order relation $\preceq\ \subseteq \{1,2,3\}^2$, and unary relations $U_a =
\{1,3\}$ and $U_b = \{2\}$ indicating which positions labeled by $a$ and $b$,
respectively). By Kamp's theorem \cite{kamp}, the logic is equivalent to Linear
Temporal Logic (LTL). First-order logic characterization of star-free languages
can be extended with all numerical predicates to give us a characterization of
the circuit complexity class (nonuniform) $\ACzero$, which can be defined by a 
class of problems that can be solved by a family $\{C_n\}_{n \geq 0}$ of
constant-depth polynomial-sized (i.e. polynomial in $n$) boolean circuits (with 
unbounded fan-ins), wherein $C_n$ is employed to decide input strings of length 
$n$. Note that a $k$-ary numerical predicate simply means a relation $R
\subseteq \N^k$. In the sequel, we also use the fragment FO[Mon], which restricts the above use of numerical predicates only to \emph{monadic} (i.e. unary) numerical predicates. This is a strict subset of $\ACzero$.

The circuit complexity $\TCzero$ extends $\ACzero$ with majority gates,
which effectively allows one to encode all standard arithmetic operations on
numbers including addition, multiplication, etc. $\TCzero$ problems are often 
construed in the FLaNN (Formal Languages and Neural Networks) community as 
\emph{efficiently parallelizable} problems. Note that $\TCzero$ is a subset of
the circuit complexity class $\NCone$, which contains all problems solvable by
families of polynomial-sized circuits of logarithmic depth. It is known that
$\NCone$ contains all regular languages. [It is not known if all regular
languages are contained in $\TCzero$]. In turn, $\NCone$ is a subset of $\Logspace$, i.e., the class of problems solvable in logarithmic space.

\section{Formal Models of Transformers}
\label{sec:model}

We define several formal models of transformers, which are based on the type of
adopted attention mechanisms (i.e. hard or soft attention). We first define
these semantically, and then instantiate them based on different attention
mechanisms.

A transformer can be seen as a composition of several 
sequence-to-sequence transformations. More precisely, a \emph{seq-to-seq
transformation} is a length-preserving $f: (\R^l)^* \to (\R^h)^*$ for some
positive integers $l,h$. That is, $f$ maps an input sequence $\sigma$ of 
vectors of dimension $l$ to an output sequence $f(\sigma)$ of dimension $h$ of the
same length, i.e., $|f(\sigma)| = |\sigma|$. We write $\idim(f)$ (resp. $\odim(f)$) to 
denote the dimension of the input (resp. output) vectors of $f$, i.e., $l$
(resp. $h$). A sequence $\mu := f_1,\ldots,f_k$ of seq-to-seq transformers is said to
be \defn{well-typed} if $\idim(f_{i+1}) = \odim(f_i)$ for each $i=1,\ldots,k-1$.
We assume a finite \defn{alphabet} $\ialphabet$ of tokens (a.k.a. symbols or
characters) not containing the \defn{end-of-string symbol} $\EOS$. We write
$\ialphabet_\EOS$ to denote $\ialphabet \cup \{\EOS\}$.
A \emph{transformer} $\transformer$ over $\ialphabet$  can then 
be defined as a triple $(\mu,\tokenem,\vecT)$, where $\mu$ is a
well-typed sequence of seq-to-seq transformers as above, $\tokenem:
\ialphabet_\EOS \to \R^d$ with $d = \idim(f_1)$ is called a \emph{token 
embedding}, and $\vecT \in \R^{s}$ with $s = \odim(f_k)$. The token embedding
$\tokenem$ can be extended to $\tokenem: \ialphabet^* \to (\R^d)^*$ by 
morphism, i.e.,
$\tokenem(w_1\cdots w_n) = \tokenem(w_1)\cdots \tokenem(w_n)$, with $w_1\cdots
w_n \in \ialphabet^*$.
The language $L
\subseteq \ialphabet^*$ accepted by $\transformer$ consists precisely of 
strings $w \in \ialphabet^*$ such that the last vector $\vecV$ in 
\begin{equation}
    f_k(f_{k-1}(\cdots f_1(\tokenem(w\EOS)) \cdots )) \label{eq:accept}
\end{equation}
--- that is, at position $|w|+1$ in the sequence --- satisfies
$\langle \vecT,\vecV \rangle > 0$, where $\langle \vecT,\vecV \rangle$ denotes
the dot product of $\vecT$ and $\vecV$. That is, we first apply $f_1,\ldots,f_k$ (in
this order) to the sequence $\tokenem(w\EOS)$ of vectors, and check if a
weighted sum of the arguments in the last vector is positive.

\begin{remark}
    The above setting of transformers does not admit \emph{Chain of Thoughts
    (CoTs)}.  With CoTs, a transformer $\transformer$ on input $w$ will output 
    symbols, which are then continuously fed back into $\transformer$ until a specific
    output symbol is produced. That is, on input $w$, $\transformer$ produces 
    a symbol $a_1$. We then run $\transformer$ on input $wa_1$ and produce
    another symbol $a_2$, and so on. It is known that transformers with CoTs
    are Turing-complete \cite{turing1,turing2,MS24}. In the sequel, we do not consider transformers with CoTs. \qed
\end{remark}

We have thus far defined the notion of transformers only semantically. We now 
discuss how to define a seq-to-seq transformation more concretely. To this end,
we employ the following ideas:
\begin{enumerate}
    \item Use \emph{piecewise linear functions} to modify a single vector in
        the sequence. 
    \item Use \emph{attention} to ``aggregate'' several vectors in the sequence.
\end{enumerate}
We will discuss these in turn.

\subsection{Piecewise linear functions}
A \defn{piecewise linear function} is a function $f: \R^r \to \R^s$ that is
representable by a Feed-Forward Neural Network (FFNN). More precisely, a
piecewise linear function can be defined inductively:
\begin{description}
    \item[(Base)] Each identity function $Id: \R^r \to \R^r$ is piecewise
        linear.
    \item[(Affine)] If $f: \R^r \to \R^s$ is piecewise linear and $g: \R^s \to
        \R^t$ is an affine transformation\footnote{That is, given an input
        vector $\vecX$, we output $A\vecX + \vecC$, where $A$ is a linear 
        transformation and $\vecC$ is a constant vector.}, then the composition
        $f\circ g: \R^r \to \R^t$ is piecewise linear.
    \item[(ReLU)] If $f: \R^r \to \R^s$ is piecewise and $i \in\{1,\ldots,s\}$,
        then the function $g: \R^r \to \R^s$ defined as
        \[
            g(\vecV) = (w_1,\ldots,w_{i-1},\max\{0,w_i\},w_{i+1},\ldots,w_s), 
        \]
        where $f(\vecV) = (w_1,\ldots,w_s)$, is piecewise linear. 
\end{description}
As before, we can extend each piecewise linear function to sequences of vectors
by morphisms, i.e., $f: (\R^r)^* \to (\R^s)^*$ with $f(\vecV_1,\ldots,\vecV_n)
:= f(\vecV_1),\ldots,f(\vecV_n)$. Notice, however, such functions can
\emph{only}
modify a vector at the $i$th position in the sequence solely based on its values
and \emph{not} on the values of vectors at other positions. An intra-sequence 
aggregation of values is enabled by the so-called \emph{attention}, which we
discuss next.

\subsection{Attention layers}

To define an attention layer, we assume a \emph{weight normalizer} $\weight: 
\R^* \to \R^*$, which turns any $d$-sequence of weights into another $d$-sequence
of weights. We will define some common normalizers below, which will result in
hard and soft attention layers. 

A seq-to-seq transformation $f: (\R^r)^* \to (\R^s)^*$ generated by an
attention layer associated with $\weight$ is given by three piecewise linear 
functions $A,B,C$
\[
    A,B: \R^r \to \R^r \qquad C: \R^{2r} \to \R^s.
\]
defined as follows. On input $\sigma = \vecX_1,\ldots,\vecX_n$, we have
$f(\sigma) = \vecY_1,\ldots,\vecY_n$ such that
\[
    \vecY_i := C(\vecX_i,\vecV)
\]
where %$\vecV = \sum_{j=1}^n \vecW(j) \vecX_j \qquad$ with
\begin{eqnarray}
    \vecV & := & \sum_{j=1}^n \vecW(j) \vecX_j, \label{eq:weighted_sum} \\
    \vecW & := & \weight(\{\langle A\vecX_i,B\vecX_j\rangle\}_{j=1}^n).
    \label{eq:weight}
\end{eqnarray}
In other words, an attention layer looks at a vector $\vecX_i$ at each position 
$i$ and decides ``how much attention'' is to be given to vectors
$\{\vecX_j\}_{j=1}^n$ at any
position in the input sequence. To this end, one obtains a sequence of weights 
$\{\langle \vecX_i,\vecX_j\rangle\}_{j=1}^n$. After normalizing this using
$\weight$, the result of the attention is $\vecV$, which is a weighted sum
$\{\vecX_j\}_{j=1}^n$ over all the input vectors.

\subsubsection*{Soft Attention.}
Practical transformers use weight normalizers defined by the softmax function,
which turns a sequence of weights into a probability distribution. In
particular, 
%Soft(max) attention uses the softmax function to convert a sequence of vectors
%into a probability distribution normalized by their values. In particular, 
given a sequence $\sigma = x_1,\ldots,x_n \in \R^n$, define $\softmax(\sigma) 
:= y_1,\ldots,y_n$, where
\[
    y_i := \frac{e^{x_i}}{\sum_{j=1}^n e^{x_j}}.
\]
A \defn{SoftMax Attention Transformer (SMAT)} consists of seq-to-seq transformations that are
defined using the softmax weight normalizer.

%Then, a seq-to-seq transformation $f: (\R^r)^* \to (\R^s)^*$ generated by a soft 
%attention layer is given by three piecewise linear functions $A,B,C$
%\[
    %A,B: \R^r \to \R^r \qquad C: \R^{2r} \to \R^s.
%\]
%defined as follows. On input $\sigma = \vecX_1,\ldots,\vecX_n$, we have
%$f(\sigma) = \vecY_1,\ldots,\vecY_n$ such that
%\[
    %\vecX_i := C(\vecX_i,\vecV)
%\]
%where
%\[
    %\vecV = \langle \vecX_i,\vecW\rangle \quad \text{with} \quad 
    %\vecW = \softmax(\{\langle A\vecX_i,B\vecX_j\rangle\}_{j=1}^n).
%\]

\subsubsection*{Hard Attention}
As previously mentioned, softmax attention is sometimes rather difficult to
analyze, owing to the usage of exponential functions. This led researchers to
use other weight normalizers that led to the so-called \defn{hard attention
layers}. More precise, there are two common flavors: 
\defn{unique hard attention} and \defn{average hard attention}. A unique hard 
attention uses the weight normalizer $\uha$ that selects the leftmost maximum 
weight, i.e., $\uha(x_1,\ldots,x_n)=(y_1,\ldots,y_n)$, where
$y_i := 1$ if $i$ is the leftmost position in $\vecX := x_1,\ldots,x_n$ with
$x_i = \max(\vecX)$; or else $y_i := 0$. An average hard attention uses the 
weight normalizer $\aha$ that selects \emph{all} positions with maximum
weight, i.e., $\aha(x_1,\ldots,x_n)=(y_1,\ldots,y_n)$, where 
$y_i := 1/|P|$ if $x_i = \max(\vecX)$; or else $y_i := 0$. Here 
$P$ consists of positions $i$ in $\sigma$ such that $x_i$
            is maximum in $\vecX$.

\subsection{Positional information}
Thus far, we have actually defined a rather weak class of transformers (called
\emph{NoPE-transformers}) that cannot distinguish different positions in the 
input sequence. They recognize \emph{permutation-invariant} languages, i.e.,
a string $s$ is in the language $L$ iff all of the reorderings of $s$ are in
$L$. There are two common ways to recover ordering: (1) \emph{masking} and 
(2) \emph{Position Embeddings (PEs)}. We will go through these in turn.

\subsubsection*{Masking.} Masking is used to ``hide'' some positions in an
input sequence to a layer with respect to a certain ``anchor'' position. The
most commonly used type of masking is called \emph{strict future masking}, 
which we will focus on in the remainder of the paper.

Intuitively, when attention is applied with respect to the position $i$, we 
looked at \emph{all} positions and computed a normalized weight sequence
accordingly. The version with strict future masking modifies this by considering
only positions $j$ \emph{strictly before} $i$, i.e., $j < i$. Formally, one
simply modifies Equation \ref{eq:weighted_sum} and Equation \ref{eq:weight} by 
the masked version:
\begin{align*}
    \vecV &= \sum_{j=1}^{i-1} \vecW(j) \vecX_j, \qquad & 
    \vecW &= \weight(\{\langle A\vecX_i,B\vecX_j\rangle\}_{j=1}^{i-1}).
\end{align*}

\subsubsection*{Position Embeddings (PEs).} A \defn{Position Embedding} is an 
\emph{arbitrary} function of the form $p: \N \times \N \to \R^d$. The idea is 
that $p(i,n)$ indicates the position information of the vector at position $i$ for
a sequence of length $n$. Thus, to extend transformers by PEs, we first apply
both the token embedding and the PE $p$ to the input string $w=w_1\cdots w_n$
before processing the resulting sequence of vectors in the usual way. More 
formally, we modify the above acceptance condition in the definition of 
transformers by using 
\[
    f_k(f_{k-1}(\cdots f_1(\sigma)) \cdots ))
\]
where, instead of Equation \ref{eq:accept}, we use
\[
    \sigma :=
\tokenem(w_1)+p(0,n+1),\cdots,\tokenem(w_n)+p(n,n+1),\tokenem(\EOS)+p(n+1,n+1). 
\]

At this point, it is appropriate to ask what types of PEs are reasonable. In
practice, PEs may use trigonometric functions (e.g. $\sin$) and various other
information about the position in the sequence (e.g. the ``absolute'' position
$i$, the length $n$ of the sequence, etc.). Thus, researchers have studied
transformers with PEs \emph{without} any restriction whatsoever on the PEs.
Remarkably, some interesting results can already be proven in this setting. We
will mention some restricted classes later.
We end this section with an easy result:

%There
%have been multiple types of PEs that have been considered in the literature, as 
\begin{proposition}
    Each Masked UHAT (resp. AHAT) with PEs can be simulated by UHAT (resp. 
    AHAT) with PEs with no masking.
\end{proposition}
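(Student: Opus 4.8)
The plan is to exploit the fact that the positional encoding is completely unconstrained: I would use it to make every \emph{masked} position so unattractive that no hard attention layer ever selects it, thereby reproducing the masked transformer verbatim, position by position and layer by layer. The key observation is that $\uha$ and $\aha$ are insensitive to the magnitudes of the attention scores --- they depend only on \emph{which} positions attain the maximum. So it suffices to arrange, in each attention layer, that on an input of length $n$ the attention score from position $i$ to position $j$ is left exactly unchanged when $j<i$, and is pushed strictly below every score $\{\text{from }i\text{ to }j'\}_{j'<i}$ when $j\ge i$. Then the leftmost maximiser (UHAT) and the set of maximisers (AHAT) coincide with those of the masked layer, and since the aggregated values in the retained coordinates also agree, the layer's output agrees with the masked output.

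Concretely, I would build a fresh transformer whose token embedding and positional encoding additionally carry, at position $i$ of a length-$(n{+}1)$ input, a fixed-width block recording the absolute index $i$ (and $n{+}1$), plus a flag marking $i=1$. The feed-forward (piecewise linear) maps, the combination maps $C$, and the acceptance vector $\vecT$ of the original transformer are reused, padded with zeroes on the new coordinates and made to carry the positional block through unchanged; this is legitimate because those maps act position-locally, and $C$ can copy the block from its first argument $\vecX_i$. For an attention layer with query/key maps $A,B$ I would replace them by $A',B'$ that keep $A\vecX_i$ and $B\vecX_j$ in one coordinate block and, in a fresh block, place position-only values whose inner product equals a ``masking offset'' $g(i,j)$ that is $0$ for $j<i$ and a large negative $-\Delta_n$ for $j\ge i$, so that $\langle A'\vecX'_i,B'\vecX'_j\rangle=\langle A\vecX_i,B\vecX_j\rangle+g(i,j)$, which is exactly the required behaviour once $\Delta_n$ is large enough.

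To know that a workable $\Delta_n$ exists and can be baked into the construction, I would show by induction on the layers that every vector arising in the computation on a length-$(n{+}1)$ input has norm bounded by a computable function of $n$: a piecewise linear map scales the norm by at most a fixed constant, and an attention layer forms convex combinations and so cannot increase the maximum norm; hence every attention score has magnitude at most some $\Theta_n$, and $\Delta_n:=2\Theta_n+1$ suffices. The degenerate position $i=1$ (empty masked window) is handled by letting $C$ detect the flag and output $C(\vecX_1,\mathbf 0)$, a piecewise linear selection that is well defined because the aggregated vector is itself bounded by the same estimate. An induction on layers then shows that at every position the simulating transformer agrees with the masked transformer on the original coordinates, so it accepts the same language.

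The step I expect to be the real obstacle is the requirement that $g(i,j)$ vanish \emph{exactly} on $\{j<i\}$: attention scores are real-valued and two scores among the retained positions may be arbitrarily close, so any position-dependent perturbation could flip which one is the (leftmost) maximiser, and no coarser guarantee will do. Realising a function that is a genuine $0/{-}\Delta_n$ step in $j$ as an inner product of \emph{fixed-width}, position-local feature vectors is what has to be engineered here, and it is precisely the place where the unrestricted positional encoding must be used to its full extent (and, conceivably, where a fixed- or logarithmic-precision assumption would be invoked instead, using the bounded separation between distinct scores).
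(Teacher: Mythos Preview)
The paper does not actually prove this proposition: it is announced as ``an easy result'' at the end of Section~\ref{sec:model} and then left without proof, so there is no argument to compare your proposal against line by line.

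Your high-level plan---carry the original computation in one block of coordinates, carry positional data in a fresh block, and add an offset $g(i,j)$ to every attention score so that future positions become strictly non-maximising---is the natural one, and you correctly isolate the crux: the offset must be \emph{exactly} constant across all unmasked positions $j<i$ (otherwise it can flip a leftmost maximiser for $\uha$, or change the set of maximisers for $\aha$) while being strictly smaller on every $j\ge i$. The issue is that you stop precisely there. The $n\times n$ matrix $(g(i,j))_{i,j}$ with $g(i,j)=0$ for $j<i$ and $g(i,j)=-\Delta_n$ for $j\ge i$ is upper triangular with nonzero diagonal, hence has rank $n$; it therefore \emph{cannot} be written as $\langle u_i,v_j\rangle$ with $u_i,v_j$ in a space of fixed dimension~$d$ once $n>d$. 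Since $A,B$ act position-locally, allowing them to be piecewise linear rather than linear does not help: the score is still a sum of at most $r$ rank-one terms $a_k(\vecX_i)\,b_k(\vecX_j)$. So the construction you describe is impossible as stated, not merely delicate.

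For UHAT there is a genuine repair along the lines you hint at: for each $n$ only finitely many score values occur (finitely many inputs, deterministic computation), so there is a minimum gap $\epsilon_n>0$ between distinct scores; one can then add an offset of the form $-M_n R_n^{\,j-i}$ (with $R_n$ large enough that the variation over $j<i$ is below $\epsilon_n$, and $M_n$ large enough that $j\ge i$ falls below everything), which preserves the strict order among $j<i$ and breaks ties toward smaller $j$---exactly what $\uha$ does anyway. For AHAT this repair fails: ties among $j<i$ must be preserved \emph{exactly}, so any non-constant perturbation on $\{j<i\}$ changes the averaging set. Your proposal therefore leaves the AHAT half genuinely open; closing it requires either a different encoding of the mask (not a single additive inner-product offset) or an argument that uses extra layers to reconstruct the masked average from unmasked primitives.
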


\section{Unique Hard Attention Transformers}
\label{sec:uhat}

The first fundamental result concerning UHAT comes from \cite{Hahn20,HAF22},
which show that their class of languages is contained in the well-studied
circuit complexity class $\ACzero$, consisting of problems solvable by
constant-depth, polynomial-size Boolean circuits. More recently, this
containment was proven strict \cite{BKLP24}.

\begin{theorem}[\cite{Hahn20,HAF22,BKLP24}]
    UHAT with PEs is strictly subsumed in $\ACzero$.
    \label{th:ac0}
\end{theorem}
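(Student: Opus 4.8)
The plan is to prove the two directions separately: the containment $\text{UHAT with PEs}\subseteq\ACzero$ (following \cite{Hahn20,HAF22}) and its strictness (following \cite{BKLP24}), the latter requiring an explicit language in $\ACzero$ that is provably not recognized by any UHAT.

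For the containment, the crucial observation is that unique hard attention produces a \emph{one-hot} weight vector, so the aggregation $\sum_j \vecW(j)\vecX_j$ collapses to $\vecX_{j^*}$ for $j^*$ the leftmost score-maximizing position: an attention layer performs a \emph{selection}, not a sum, which is exactly what keeps us inside $\ACzero$ (summing $n$ numbers already escapes $\ACzero$). Fix an input length $n$ and build a nonuniform circuit $C_n$ over the $n$ input symbols. We argue by induction on the layer index $\ell$ that, for each position $p$, the vector held at $p$ after $\ell$ layers ranges over a set $V_{p,\ell}$ of real vectors of size polynomial in $n$, and that for each $v\in V_{p,\ell}$ the predicate ``the vector at $p$ after layer $\ell$ is $v$'' is computed by a constant-depth, polynomial-size subcircuit. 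The base case ($\ell=0$: token embedding plus positional encoding) reads one input symbol and adds the hardwired constant $p(p,n{+}1)$. In the inductive step a position-wise piecewise-linear map merely relabels $V_{p,\ell}$; an attention head must in addition identify $j^*$, and ``$j^*=j$'' is a disjunction, over the finitely many value pairs $(v,u)$ available at $(p,j)$, of the conjunction of ``$p$ holds $v$'', ``$j$ holds $u$'', and $n-1$ score comparisons, each of which — after fixing $v,u,u'$ — is a fixed TRUE/FALSE that the circuit designer bakes in; this is where the arbitrariness of the real weights and of the positional encoding is absorbed, since the circuit never computes with reals, it only looks up pre-evaluated comparison bits. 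Composing a constant number $L$ of layers keeps depth $O(L)=O(1)$ and size polynomial; acceptance is the disjunction, over $v\in V_{|w|+1,L}$ with $\langle\vecT,v\rangle>0$, of ``the vector at the $\EOS$ position is $v$''. Masking is handled directly (the query at $p$ only ranges over $j<p$) or via the preceding Proposition. As a sanity check this re-derives $\parity\notin\text{UHAT}$, since $\parity\notin\ACzero$.

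For the strictness I would first distil from the above analysis the key structural limitation: a UHAT's accept/reject decision on an input of length $n$ is a fixed function of $n$ together with the symbols at only $O(1)$ positions, and those positions are produced by a bounded cascade of ``extremal selections'' — each head, in each layer, picks, among the positions carrying a given already-computed local value, the one extremizing a fixed function of the index and of $n$. The plan is then to exhibit an explicit $L_0\in\ACzero$ whose membership genuinely depends on super-constantly many positions in a way no such bounded, input-adaptive selection can reproduce; natural candidates are languages encoding elementary arithmetic — a suitable encoding of correct binary addition, or a more elementary carry-propagation language, which sit in $\ACzero$ (indeed in $\mathrm{FO}[<]$ for appropriate formulations) yet whose correctness can hinge on a propagation running across a linear number of positions — although a purpose-built $L_0$ tailored to defeat the cascade structure may be needed. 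One then proves $L_0\notin\text{UHAT}$ by an adversary argument: given a purported UHAT, construct two inputs of the same length, one in $L_0$ and one not, that coincide on every position the UHAT could possibly consult, forcing identical outputs and a contradiction.

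The containment is essentially careful bookkeeping; the genuine difficulty is the lower bound, and within it two points stand out. First, $L_0$ must be simultaneously easy enough to place in $\ACzero$ yet robust against \emph{every} bounded input-adaptive choice of $O(1)$ positions together with the length $n$ — merely hiding one position does not suffice, because the UHAT may adapt which positions it reads to the input. Second, precisely because of this adaptivity the indistinguishability argument cannot be purely local: the natural route is a Ramsey/pigeonhole step that, restricted to a sufficiently large family of inputs, first \emph{freezes} the selection pattern used by the given UHAT and only afterwards perturbs a position the frozen pattern never reaches. I expect this adaptivity-control step to be the main obstacle.
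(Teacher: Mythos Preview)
Your containment argument is correct and is essentially the paper's: both rest on the observation that, because unique hard attention \emph{selects} a single position rather than summing, only polynomially many distinct vectors can occur after any bounded number of layers (the paper phrases this as a global bound $|V_n|=O(n^{2^{h}})$ for $h$ layers, you track it per position---equivalent bookkeeping), so a nonuniform circuit family can hardwire every needed score comparison as a constant bit.

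For strictness the paper gives no argument beyond the citation to \cite{BKLP24}, so there is no in-paper proof to compare against. Your structural observation---that the output at the $\EOS$ position after $L$ layers is determined by the symbols at at most $2^L$ adaptively chosen positions---is correct and is the right starting point. However, your named candidates for $L_0$ cannot work: you propose binary-addition or carry-propagation languages and explicitly note they sit ``indeed in $\mathrm{FO}[<]$ for appropriate formulations''; but anything in $\mathrm{FO}[<]$ is star-free and hence, by Theorem~\ref{th:uhat_char}, already in UHAT (even in masked NoPE-UHAT). More generally, since $\mathrm{FO[Mon]}\subseteq\text{UHAT with PEs}$ by that same theorem, any separator must lie in $\ACzero\setminus\mathrm{FO[Mon]}$, i.e., it must genuinely require a non-monadic numerical predicate---so your hedge toward ``a purpose-built $L_0$'' is not optional but mandatory. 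The Ramsey/freezing strategy you sketch for taming adaptivity is the right shape, but without a concrete $L_0$ outside $\mathrm{FO[Mon]}$ and a worked-out indistinguishability argument, this half remains a plan rather than a proof.
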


\begin{proof}[Proof idea]
    Let us quickly discuss the proof idea behind the containment in $\ACzero$. Fix
    an UHAT $\transformer$ with, say, $h$ layers.
    For simplicity, let us assume the alphabet $\ialphabet = \{a,b\}$. The key
    idea is that there is a polynomial function $p(n)$, for any possible 
    string length $n$, such that 
    the set $V_n$ of vectors --- as well as the set $S_n$ of
    possible 
    attention scores --- that can be generated in the computation of the UHAT 
    has size $|V_n| = O(p(n))$. More precisely, in the input layer after
    application of $\tokenem$ and position encoding, we can generate $O(n)$ many
    vectors. In the next layer, there are $O(n^2)$ many vectors. In the $k$th
    layer, there are $O(n^{2^{k-1}})$ possible vectors. Therefore, we may set 
    $p(n)$ to be $O(n^{2^h})$.

    Thus, we may represent each vector in $V_n$ and each attention score in 
    $S_n$ using $O(\log n)$ bits. Therefore, using a constant depth
    polynomial-sized boolean circuit (by a simple enumeration), we can 
    represent the relation $\preceq\ \subseteq S_n \times S_n$ containing 
    pairs $(s,s')$ such that $s$ has a smaller attention score as $s'$.
    Similarly, using a constant depth polynomial-sized boolean circuit, we can
    represent the relation $R \subseteq V_n \times V_n \times S_n$ such that
    $R(\vecV,\vecV',s)$ iff $\langle \vecV,\vecV'\rangle = s$. Together, this
    allows us to represent --- using constant depth polynomial-sized boolean
    circuits --- the function $f_\ell: V_n^n \times \{1,\ldots,n\} \to V_n$
    such that $f_\ell( (\vecV_1\cdots \vecV_n),i) ) = \vecV$ iff, whenever the
    $\ell$th layer has input sequence $\vecV_1\cdots \vecV_n$, the vector
    at position $i$ at layer \#$(\ell+1)$ is $\vecV$. All in all, this gives
    rise to a constant-depth polynomial-sized boolean circuit $C_n$ for input
    strings of length $n$.

    To conclude the theorem, we simply use the (non-uniform) family $\{C_n\}_{n
    \geq 0}$ of circuits to represent $\transformer$.
\end{proof}

Combined with well-known limitations of $\ACzero$ (e.g. see
\cite{Ajtai83,libkin-book}), the above result shows that some languages are not 
expressible by UHAT, including PARITY and MAJ, where the latter is defined as:
\[
    \text{MAJ} := \{ w \in \{a,b\}^* : |w|_a \geq |w|_b \}.
\]
While this provides us a ceiling of what languages 
are expressible as UHATs, the following
two results show what UHATs are capable of. To this end, we write FO[Mon] to
denote first-order logic over strings extended only by \emph{monadic} numerical
predicates (i.e. sets of numbers); recall that this would have yielded 
$\ACzero$ if extended with all $k$-ary ($k \geq 1$) numerical predicates; 
see \cite{libkin-book}. An example of monadic numerical predicates is $\Mod^3_2$
containing all numbers that are $2\pmod{3}$. 
%It turns out that UHAT with PEs can
%encode all languages expressible in FO[Mon]:
\begin{theorem}[\cite{BKLP24,masked-uhat}]
    FO[Mon] is expressible by UHAT with PEs, as well as by masked UHATs with
    finite-image PEs. In addition, masked NoPE-UHAT
    coincides with FO, which in turn coincides with star-free languages. 
    \label{th:uhat_char}
\end{theorem}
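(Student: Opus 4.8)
The plan is to establish the theorem in two halves, corresponding to the two attention/positional regimes, and to lean throughout on Kamp's theorem and the standard translation between LTL and first-order logic over strings mentioned in the background section. For the first half, I would show that every language definable in FO[Mon] is recognized by a UHAT with (unrestricted) PEs. The natural route is to work with LTL rather than FO directly: by Kamp's theorem, FO over strings is captured by LTL, and FO[Mon] is captured by LTL extended with the monadic numerical predicates encoded as extra unary "oracle" letters attached to each position. Since PEs are \emph{arbitrary} functions $p\colon \N\times\N\to\R^d$, I can use the PE to hand each position $i$ a direct encoding of which monadic predicates $i$ satisfies, plus the raw value $i$ itself (or $i/n$, etc.). Once those predicates are available as position-local data, what remains is to simulate plain LTL/FO by a UHAT. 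I would do this by structural induction on the LTL formula: each subformula $\varphi$ is tracked by a coordinate that holds its truth value at every position, Boolean connectives are handled by the piecewise-linear (ReLU/affine) layers, and the temporal operators \textbf{F}, \textbf{G} (and \emph{since}/\emph{until}, if needed) are handled by a unique-hard-attention layer: e.g. to compute $\F\psi$ at position $i$ one attends — using the raw position as a tie-broken scoring scheme — to the \emph{nearest} position $j>i$ (or $j<i$, depending on orientation) at which $\psi$ already holds, reading off its flag. Because UHAT returns the leftmost argmax, I can engineer scores so that "nearest witnessing position" is exactly selected, which is all an $\F$ (or $\U$) step needs.

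For the masked-UHAT-with-finite-image-PEs claim, the key point is that we no longer have arbitrary real PEs to smuggle in absolute position, but strict future masking supplies the order relation intrinsically: attending only to $j<i$ lets a layer compare "what holds among strictly earlier positions". This is precisely the setting in which hard attention is known to compute LTL-with-past-style properties; I would reuse the inductive simulation above but replace each "attend to nearest witness" step by a masked attention step, and use the finite-image PE only to mark the monadic-predicate membership of each position (finite image suffices because there are finitely many predicates being named, or rather the relevant information at each position is a bounded-size bit vector — the subtlety, if the monadic predicates are genuinely infinitely many distinct sets, is that only finitely many can be mentioned in a given formula, so finite image is enough). The direction FO[Mon] $\subseteq$ masked UHAT thus goes through with essentially the same construction.

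For the last sentence — masked NoPE-UHAT coincides with FO, equivalently star-free — I would prove two inclusions. Containment of star-free (= FO) languages in masked NoPE-UHAT is the $p=\text{const}$, no-extra-predicates specialization of the construction just described: masking alone provides order, and the LTL induction gives the simulation. The converse, that every masked NoPE-UHAT language is star-free, is where I expect the real work. By Theorem~\ref{th:ac0} (adapted to the masked, no-PE case) such languages lie in $\ACzero$, but $\ACzero$ is far larger than star-free, so circuit complexity alone does not suffice; instead I would argue algebraically/combinatorially. One clean strategy is to show the syntactic monoid of any masked NoPE-UHAT language is aperiodic: unique hard attention with strict future masking has a "bounded-lookback" flavor — at each layer, the value at position $i$ depends on $x_i$ and on a single selected earlier position — and iterating a constant number of such layers yields a function whose dependence on the input exhibits no genuine modular counting, which one formalizes by a pumping/Ehrenfeucht–Fra\"iss\'e-style argument or by directly bounding the behavior under the relevant monoid congruence. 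Making this "no modular counting" intuition into a proof that the accepted language is closed under the star-free closure operations — or equivalently that $\parity$-like behavior is impossible even with masking and no PEs — is the main obstacle, and I would expect the argument to hinge on carefully tracking how the finite set of reachable vectors at each position evolves and showing that sufficiently long repetitions of a block cannot be distinguished.
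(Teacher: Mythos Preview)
Your overall strategy matches the paper's: reduce FO[Mon] to LTL[Mon] via Kamp's theorem, then simulate LTL by structural induction, maintaining one coordinate per subformula, handling Boolean connectives by ReLU/affine layers and temporal operators by a hard-attention ``find the nearest witness'' step. The one place where the paper is substantively more concrete than your sketch is the attention score for $\F\varphi$ in the unmasked-with-PEs setting. Your ``raw position $i$ as a tie-broken scoring scheme'' is under-specified: with a PE carrying only the integer $i$, a bilinear score $\langle A\vecX_i, B\vecX_j\rangle$ cannot by itself prefer $j>i$ over $j<i$ in the way you need. The paper's device is to place $(\cos\theta_i,\sin\theta_i)$ in the PE with $\theta_i = \pi(1-2^{-i})/10$, so that (after an affine $B$) the score contains the term $\cos(\pi(2^{-i}-2^{-j})/10)$; the geometric spacing $2^{-i}$ makes the angular gap to any position on the left strictly larger than to every position on the right, so this cosine is maximized only among $j\ge i$ and is monotone decreasing in $j$ there. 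Combined with a large additive penalty $-10\cdot x_{\neg\varphi}$ at position $j$ (after first forcing $x_{\neg\varphi}:=0$ at the last position $n$), unique hard attention then selects exactly the leftmost $j\ge i$ with $\varphi$, or else position $n$. That left/right asymmetry trick is the non-obvious ingredient your outline is missing. As for the converse direction (masked NoPE-UHAT $\subseteq$ star-free), note that the paper's proof sketch does not actually argue it --- it is deferred entirely to the cited source --- so your aperiodicity/EF plan is additional rather than divergent, and is a reasonable line to pursue.
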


\begin{proof}[Proof idea] 
To prove the containment of FO[Mon] in UHAT ---either with PEs or masked attention with finite image PEs--- we use Kamp's theorem \cite{kamp}: FO[Mon] coincides in expressivity with LTL[Mon], i.e., LTL formulas that also use monadic numerical predicates as atomic propositions. Unlike FO formulas, which have multiple variables, LTL formulas are \emph{unary}, meaning that their semantics is a set of positions over a string. This simpler structure of LTL aligns well with the expressive power of UHATs, allowing for a proof using structural induction. In particular, we inductively show that for every LTL formula $\phi$ with unary numerical predicates, there exists a UHAT ${\cal T}_\phi$ such that on input $\sigma = \vecX_1,\ldots,\vecX_n$, corresponding to the embedding of a word $w = a_1,\ldots,a_n \in \Sigma^+$, it outputs a sequence ${\cal T}_\phi(\sigma) = \vecY_1,\ldots,\vecY_n$ over $\{0,1\}$ that contains a 1 precisely in those positions of $w$ that satisfy $\phi$. 

Let us give some intuition on how to do the aforementioned induction proof. For the base case, we deal with only $Q_a$ (saying that the current position has letter $a$) or a monadic numerical predicate $U \subseteq \N$. We need to set up the token embedding function $\tokenem$ and position embedding $p$ with a large enough dimension so that information on truth/falsehood of each atomic proposition in the given LTL formula can be read off directly. For example, for a string $w := abaa$ with the LTL formula $\Global( \Mod^2_2 \to Q_a)$, we would map $w$ to the following sequence of vectors:
\[
    (1,0,0), (0,1,1), (1,0,0), (1,0,1)
\]
where the vector at position $i$ corresponds to    $(Q_a(i),Q_b(i),\Mod^2_2(i))$. Note, we omitted $\EOS$ and potentially other ``information'' in the PEs for readability.

For the inductive case, one introduces new arguments at each position (i.e. increases the dimension) to encode truth/falsehood of the formulas higher up in the parse tree. Note, we keep the information stored in the previous layer.

For boolean combinations, one can handle this with piecewise linear functions. That is, $\neg \varphi$ can be implemented by the function $1-x_{\varphi}$, where $x_{\varphi}$ encodes the value of $\varphi$ at the same position in the string. For $\varphi \vee \psi$, we can implement it as $x_{\varphi} + \ReLU(x_\psi - x_\varphi) =
x_{\varphi} + \max(0,x_\psi - x_\varphi)$.

We next give an intuition how to do $\F\varphi$ and show how to do this with
    PEs (with no masking). For other temporal operators, the reader is referred
    to \cite{BKLP24}.
    To this end, we assume by induction that the value
    $x_{\varphi}$ and $x_{\neg \varphi}$ are available at every position in the
    sequence. The first step is to ``nullify'' the value $x_{\neg \varphi}$ at
    the last position $n$, i.e., $x_{\neg \varphi}[n] := 0$. See the proof of
    Lemma 1 in \cite{BKLP24}. We then assume the use the following information
    at position $i$:
    \[
        \vecV_i := \langle \cos(\pi(1-2^{-i})/10),
        \sin(\pi(1-2^{-i})/10),1,x_{\neg\varphi} \rangle.
    \]
    With an appropriate affine transformation $B$, we have
    \[
        B \vecV_i := \langle \cos(\pi(1-2^{-i})/10),
        \sin(\pi(1-2^{-i})/10),-10.x_{\neg\varphi},0 \rangle.
    \]
    Thus, we have
    \[
        \langle \vecV_i,B \vecV_j \rangle := \cos(\pi(2^{-i}-2^{-j})/10) -
        10.x_{\neg\varphi}.
    \]
    The value $\cos(\pi(2^{-i}-2^{-j})/10)$ is maximized at position $j \geq 
    i$ and not at $j < i$.  In addition, the value $-10.x_{\neg\varphi}$ is
    maximized at $j=n$ (possibly also at $j < i$). Thus, it follows that
    $\langle \vecV_i,B\vecV_j\rangle$ is maximized at position $j \geq i$.
    Furthermore, it can be verified that among the value $j \geq i$ the value 
    $\cos(\pi(2^{-i}-2^{-j})/10)$ monotonically decreases in $j$. All in all,
    unique hard attention picks the vector $\vecV$ at the leftmost position $j \geq i$ 
    such that $w,j \models \varphi$ (otherwise, it picks the vector $\vecV$ at 
    position $n$), with which we can forward the truth/falsehood of $\F\varphi$.
\end{proof} 

\begin{corollary}
    UHAT with PEs contain all 
    regular languages expressible in $\ACzero$. 
    \label{cor:uhat_reg}
    %Furthermore, masked UHATs with
    %modular PEs coincide exactly with all regular languages in $\ACzero$.
\end{corollary}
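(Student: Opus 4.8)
The plan is to reduce the statement to Theorem~\ref{th:uhat_char} by invoking the classical classification of the regular languages that lie inside $\ACzero$. Since Theorem~\ref{th:uhat_char} already tells us that FO[Mon] is expressible by UHAT with PEs, it suffices to show that \emph{every regular language in $\ACzero$ is definable in FO[Mon]}. Note that the ceiling direction (Theorem~\ref{th:ac0}, UHAT $\subseteq \ACzero$) is irrelevant here: the corollary only asks what UHAT \emph{can} do, so we just need to land every regular $\ACzero$ language inside FO[Mon] and then quote the simulation of FO[Mon] by UHAT.

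This is exactly where the theorem of Barrington, Compton, Straubing, and Th\'erien on regular languages in $\ACzero$ comes in. It states that a regular language $L$ belongs to $\ACzero$ if and only if $L$ is definable by a first-order sentence over strings using the order $\preceq$ together with the modular predicates $\Mod^q_r := \{\, n \in \N : n \equiv r \pmod{q} \,\}$; equivalently, on regular languages every numerical predicate can be replaced by $\preceq$ and modular counting (on the algebraic side this is the condition that the syntactic morphism of $L$ be quasi-aperiodic). I would invoke this result as a black box. The remaining observation is trivial: each $\Mod^q_r$ is just a set of natural numbers, hence a \emph{monadic} numerical predicate in the sense of FO[Mon], so $\mathrm{FO}[\preceq,\Mod] \subseteq \text{FO[Mon]}$. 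Chaining the inclusions,
\[
\text{Reg} \cap \ACzero \;\subseteq\; \mathrm{FO}[\preceq,\Mod] \;\subseteq\; \text{FO[Mon]} \;\subseteq\; \{\text{languages recognized by UHAT with PEs}\},
\]
where the last inclusion is Theorem~\ref{th:uhat_char}; this proves Corollary~\ref{cor:uhat_reg}.

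The only genuine obstacle I foresee is making sure the Barrington--Compton--Straubing--Th\'erien characterization is applied in the right regime: the family $\{C_n\}_{n \geq 0}$ produced in Theorem~\ref{th:ac0} is non-uniform, so one needs the non-uniform form of the identity $\text{Reg} \cap \ACzero = \mathrm{FO}[\preceq,\Mod]$. This is fine, because for regular languages non-uniform $\ACzero$ and uniform $\ACzero$ coincide, but it is the point one should state carefully rather than gloss over. A self-contained alternative that avoids citing that theorem would route through Kamp's theorem --- which is already used in the proof of Theorem~\ref{th:uhat_char} in the guise FO[Mon] $=$ LTL[Mon] --- and argue directly that the modular counting required by a regular $\ACzero$ language is captured by the monadic predicates $\Mod^q_r$ that LTL[Mon] carries as atomic propositions; but packaging this cleanly essentially amounts to reproving Barrington--Compton--Straubing--Th\'erien, so the citation route is the economical one.
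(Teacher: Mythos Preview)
Your proposal is correct and matches the paper's proof essentially verbatim: the paper also cites the Barrington--Compton--Straubing--Th\'erien characterization (via Straubing's book) that $\text{Reg} \cap \ACzero = \mathrm{FO}[\preceq,\Mod]$, observes that the $\Mod^d_r$ predicates are monadic, and then applies Theorem~\ref{th:uhat_char}. Your additional remark about the uniform/non-uniform distinction is a nice point of care that the paper does not spell out, but the core argument is identical.
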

\begin{proof}
    Regular languages in $\ACzero$ are expressible in FO with unary numerical 
    predicates \cite{straubing-book} (more precisely, $\Mod^d_r$ containing all
    numbers %less than the input length 
    that are $r \pmod{d}$). The corollary then 
    follows from Theorem \ref{th:uhat_char}.
\end{proof}

Theorem \ref{th:uhat_char} turns out to be powerful enough to show the following
interesting ``non-regular'' capability of UHAT with PEs.
\begin{corollary}[\cite{BKLP24}]
    Palindrome is in UHAT with PEs.
\end{corollary}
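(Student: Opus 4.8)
The plan is to reduce ``$w\in$ Palindrome'' to a property that factors into two pieces: (i) a single unique–hard–attention layer, powered by a suitably chosen position embedding, after which \emph{every} position $i$ of the input holds (a copy of) the symbol occurring at its mirror position; and (ii) a check that \emph{no} position witnesses a mismatch between its own symbol and the retrieved one. Piece (ii) is a plain first–order property (quantifier depth one, using only the order), hence is handled by Theorem~\ref{th:uhat_char}. Piece (i) is where the real content lies: Palindrome is not regular, so it lies outside FO[Mon] and Theorem~\ref{th:uhat_char} says nothing about it directly; all the theorem buys us is the logical bookkeeping of step~(ii) once the per–position ``mirror symbol'' has been made explicit by an earlier layer.

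For step (i), recall that on input $w = w_1\cdots w_n$ the embedded sequence has length $n+1$ (the last slot holding $\EOS$), and that a position embedding is an \emph{arbitrary} function of the index and of the length. Accordingly, place at each position three scalars derived from its index $j$ and from $n$: the values $j$, $-j^2$, and $n+1-j$ (plus a constant $1$ coordinate). Choose the affine maps $A,B$ so that the attention score between query position $i$ and key position $j$ equals $\langle (2(n+1-i),\,1),\ (j,\,-j^2)\rangle = -\bigl((n+1-i)-j\bigr)^2 + (n+1-i)^2$, i.e.\ a downward parabola in $j$ whose vertex is the integer $n+1-i$. Hence for every $i\in\{1,\dots,n\}$ this score is maximised \emph{uniquely} at $j = n+1-i$, so $\uha$ forces the attention output $\vecV$ at position $i$ to be exactly the vector sitting at the mirror position $n+1-i$; in particular the symbol–embedding coordinates of $\vecV$ encode $w_{n+1-i}$. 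A piecewise–linear $C$ (using $\ReLU$ to realise the bitwise comparison $\mathrm{AND}(x,y)=\ReLU(x+y-1)$ on the one–hot encodings) then writes at position $i$ a mismatch bit $m_i := \bigl[\,w_i \neq w_{n+1-i}\,\bigr]\in\{0,1\}$, with the convention $m_{n+1}:=0$ at the $\EOS$ slot (detectable because the embedding sees whether the index equals $n+1$).

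After this layer the augmented structure carries a unary predicate $m$, and $w$ is a palindrome iff $\Global\,\neg m$ holds --- equivalently $\neg\F m$, an LTL (indeed FO) property. Feeding the coordinate holding $m$ into the UHAT construction underlying Theorem~\ref{th:uhat_char} --- whose structural induction on LTL formulas applies verbatim to a coordinate produced by an earlier layer, not only to an atomic proposition of the input --- yields a further block of layers depositing the truth value of $\Global\,\neg m$ at the $\EOS$ position, after which we set $\vecT$ so that $\langle\vecT,\vecV\rangle>0$ precisely when that value is true. (Alternatively, and self–containedly, step (ii) can be done with one extra hard–attention layer: attend from $\EOS$ with score $m_j$, so $\uha$ selects the leftmost position of maximal mismatch bit; forward that bit and pick $\vecT$ to accept iff it equals $0$.)

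The main obstacle is making the retrieval layer in step (i) genuinely \emph{unique} hard attention: the mirror position must be a strict argmax simultaneously for all $i$ and all input lengths $n$, with no accidental ties at the boundary or the $\EOS$ slot. The quadratic score above is chosen precisely so that strictness is automatic --- a concave parabola with integer vertex in range has a unique integer maximiser --- and the $\EOS$ slot is removed from consideration by fiat via $m_{n+1}:=0$. The only further point to check is the ``relativisation'' used in step (ii), namely that the logic–to–UHAT compilation of Theorem~\ref{th:uhat_char} is indifferent to whether the bits it manipulates are input labels or earlier–layer outputs; this is immediate from the way that compilation threads truth values through designated vector coordinates.
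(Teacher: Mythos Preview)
Your proof is correct, but it takes a genuinely different route from the paper's.

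The paper's argument never builds an explicit ``mirror-retrieval'' layer. Instead it invokes a strengthening of Theorem~\ref{th:uhat_char}: with arbitrary PEs one may run the FO[Mon]-to-UHAT compilation with respect to \emph{any} family of linear orders $\{\preceq_n\}_{n\geq 0}$, not just the native one. Choosing the interleaved order $1,n,2,n-1,3,n-2,\ldots$ on inputs of length $n$ effectively rewrites $abccba$ as $aabbcc$; Palindrome then becomes the FO[Mon] property ``every odd position (w.r.t.\ $\Mod^2_1$) agrees with its successor'', which Theorem~\ref{th:uhat_char} handles directly.

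Your approach is more operational and arguably more self-contained: a single hand-built UHA layer with the quadratic score $2(n{+}1{-}i)j - j^2 = -(j-(n{+}1{-}i))^2 + (n{+}1{-}i)^2$ forces attention from $i$ onto its mirror $n{+}1{-}i$, after which the residual check $\Global\neg m$ is trivial FO. This makes fully explicit the one nontrivial step --- why unique hard attention can realise $i\mapsto n{+}1{-}i$ --- at the cost of a small amount of bookkeeping (the $\EOS$ slot, the piecewise-linear mismatch gate). The paper's version is slicker but leans on the unproved-in-detail extension of Theorem~\ref{th:uhat_char} to arbitrary orderings; your version would serve equally well as a proof and exposes the mechanism more transparently.
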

\begin{proof}[Proof idea]
    Using PEs, it is possible to extend Theorem \ref{th:uhat_char} with any
    desired family $\{\preceq_n\}_{n \geq 0}$, where $\preceq_n$ deals with
    strings of length $n$. Therefore, on strings of length $n$, we could use 
    the ordering 
    \[
        1,n,2,n-2,3,n-3,\ldots
    \]
    of the set $\{1,\ldots,n\}$. This essentially turns the string $abccba$
    into $aabbcc$, for example. Therefore, using the unary numerical predicate $\Mod^2_1$, we can write an LTL[Mon] (or equivalent FO[Mon]) formula that
    says that at each odd position $i$ the next position $i+1$ has to have the 
    same label as that at position $i$.
\end{proof}

We conclude our discussion of UHAT by the problem of verifying Masked UHAT with
no PEs. By verifying, this could mean checking the emptiness, universality of
the language, or its equivalence to (or containment in) another Masked UHAT.
By Theorem \ref{th:uhat_char}, each Masked UHAT can be effectively turned into
a finite-state automaton recognizing the same language. Owing to decidability 
of emptiness, universality, equivalence, and containment for finite automata, we
obtain the same decidability results for Masked UHAT with no PEs.
\begin{corollary}[\cite{masked-uhat}]
    The problem of verifying Masked UHAT with no PEs is decidable.
\end{corollary}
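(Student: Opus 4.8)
The plan is to reduce every verification question to the emptiness problem for finite automata, exploiting the effective version of Theorem~\ref{th:uhat_char}. First I would observe that the characterization ``masked NoPE-UHAT $=$ FO $=$ star-free'' is in fact \emph{constructive}: from the syntactic description of a masked UHAT $\transformer$ with no PEs --- the token embedding, the piecewise linear maps $A,B,C$ of each attention layer, the interleaved feed-forward networks, and the output vector $\vecT$ --- one can compute a finite automaton $\mathcal{A}_\transformer$ with $L(\mathcal{A}_\transformer)=L(\transformer)$. The reason, as carried out in \cite{masked-uhat}, is that in the absence of position embeddings the vectors occurring at any fixed layer range over a set that depends only on the finitely many bits of ``type'' already computed at each position, so the entire layered computation, together with the final acceptance test $\langle \vecT,\vecV\rangle>0$ read at position $|w|+1$, is simulable by a finite-state device. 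Since star-free languages are in particular regular, $\mathcal{A}_\transformer$ is a bona fide finite automaton.

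Given this effective translation, the verification tasks reduce to standard, effective automata constructions, using that regular languages are effectively closed under intersection and complementation and that emptiness is decidable by reachability. Emptiness of $L(\transformer)$ is emptiness of $\mathcal{A}_\transformer$. Universality of $L(\transformer)$ is emptiness of the complement automaton $\overline{\mathcal{A}_\transformer}$ (determinize, swap final states). Equivalence of two masked NoPE-UHATs $\transformer_1,\transformer_2$ is emptiness of the automaton recognizing the symmetric difference $(\mathcal{A}_{\transformer_1}\cap\overline{\mathcal{A}_{\transformer_2}})\cup(\overline{\mathcal{A}_{\transformer_1}}\cap\mathcal{A}_{\transformer_2})$, and containment $L(\transformer_1)\subseteq L(\transformer_2)$ is emptiness of $\mathcal{A}_{\transformer_1}\cap\overline{\mathcal{A}_{\transformer_2}}$. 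Each of these is therefore decidable.

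The main obstacle lies entirely in the first step: making the translation from a transformer to an automaton genuinely \emph{effective}. Concretely, one must check that the real constants in the piecewise linear maps (a priori arbitrary reals) do not obstruct computability --- the point being that the argument only ever needs to \emph{compare} the finitely many attention scores that can arise at each layer, never to compute with them exactly, so one reasons about the induced finite comparison structure rather than the numbers themselves. Once the finite set of reachable vector-types and the induced transition function are shown to be computable from the description of $\transformer$, the remainder is routine. I would also flag that decidability here is tied to the \emph{no-PE, masked} model: Theorem~\ref{th:uhat_char} equates only this model with star-free languages, and no analogous effective automaton construction is available for UHAT (or masked UHAT with non-finite-image PEs), so the result does not transfer to those settings.
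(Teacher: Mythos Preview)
Your proposal is correct and follows essentially the same route as the paper: invoke the effective direction of Theorem~\ref{th:uhat_char} to translate a masked NoPE-UHAT into a finite automaton, then appeal to the decidability of emptiness, universality, equivalence, and containment for finite automata. The paper's argument is stated more tersely (without spelling out the symmetric-difference reductions or the discussion of real constants), but the content is the same.
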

Recently, Bergsträßer et al. \cite{succinct} has derived a precise complexity
for the problem, i.e., EXPSPACE-complete. This gives rise to a new interesting
challenge for verification.

    %In the absence of PEs, the argument in the proof of Theorem \ref{th:ac0}
    %yields an upper bound $O(|\ialphabet|^2^{O(h)})$ on the number of possibly
    %generated vectors by the given Masked UHAT on alphabet $\ialphabet$ with $h$
    %layers. 

\section{Logical Languages for Average Hard Attention}
\label{sec:ahat}

It is easy to construct an AHAT that recognizes MAJ. This takes AHAT beyond $\ACzero$. The following result shows that $\TCzero$ still upper-bounds the capability of AHAT.
\begin{theorem}[\cite{HAF22}]
    Languages recognized by AHAT are in $\TCzero$
\end{theorem}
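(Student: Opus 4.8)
The plan is to adapt the circuit construction behind Theorem~\ref{th:ac0}. Fix an AHAT $\transformer$ with a constant number $h$ of layers and build a (non-uniform) family $\{C_n\}_{n \geq 0}$ of constant-depth, polynomial-size circuits such that $C_n$ decides membership in the language of $\transformer$ for inputs of length $n$. The only new ingredient relative to the UHAT case is that an average hard attention layer must \emph{count} how many positions attain the maximal attention score and then \emph{divide} an aggregated vector by that count; counting and integer division are exactly the operations that leave $\ACzero$ yet remain available in $\TCzero$ (recall that $\TCzero$ adds majority gates and can carry out the standard arithmetic operations), so the whole simulation stays in $\TCzero$. A preliminary step fixes the arithmetic: by induction on the layers one checks that, on an input of length $n$, every scalar arising in the run of $\transformer$ is a rational whose numerator and denominator have $\mathrm{poly}(n)$ bits (automatic under polynomial precision, and part of the precision assumption for the PE values); hence every vector occurring in the computation is described by $\mathrm{poly}(n)$ bits.

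Next I would show that each seq-to-seq transformation of $\transformer$ is computed by a $\TCzero$ circuit on such encodings. A piecewise linear function is a constant-depth composition of affine maps and coordinatewise $\ReLU$s, which on $\mathrm{poly}(n)$-bit rationals needs only bounded-arity multiplication, iterated addition, sign tests, and selection --- all in $\TCzero$. For an attention layer with parameters $A, B, C$ one proceeds in stages: (i) compute in parallel the scores $s_{ij} := \langle A\vecX_i, B\vecX_j \rangle$ for all $i,j$; (ii) for each $i$ compute $m_i := \max_j s_{ij}$ by a comparison tree; (iii) for each $i$ and $j$ compute the indicator bit $b_{ij}$, equal to $1$ iff $s_{ij} = m_i$, and the cardinality $|P_i| := \sum_j b_{ij}$ of the argmax set via a counting subcircuit; (iv) compute $\vecV_i := \frac{1}{|P_i|}\sum_{j=1}^{n} b_{ij}\,\vecX_j$ by an iterated addition of rationals followed by a division by the integer $|P_i| \leq n$, both in $\TCzero$; and (v) output $\vecY_i := C(\vecX_i, \vecV_i)$, again a piecewise linear function. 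The masked variant only restricts the range of $j$ in (i)--(iv) and is handled the same way.

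Composing the $h$ layers --- with the token embedding and PE hard-wired into the first layer, and the acceptance test $\langle \vecT, \vecV \rangle > 0$ realized by one further affine map and a sign gate --- yields $C_n$ of depth $O(h) = O(1)$ and size $\mathrm{poly}(n)$, so the language of $\transformer$ lies in (non-uniform) $\TCzero$.

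The main obstacle is the preliminary arithmetic bookkeeping together with the reliance on the $\TCzero$ bounds for iterated addition and division. One must verify that bit-lengths do not blow up super-polynomially: each attention layer introduces a fresh denominator $|P_i| \leq n$ and affine maps with rational weights enlarge numerators and denominators, but across the constantly many layers the growth remains polynomial --- this is precisely where the precision model is used in earnest. The other point to be careful about is that integer division, and hence all the rational arithmetic above, lies in $\TCzero$, the fact recalled in the introduction; once that is granted, stages (i)--(v) and the final composition are as routine as in the $\ACzero$ argument for UHAT.
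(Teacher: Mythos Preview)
Your proposal is correct and matches the paper's approach: the paper's own justification is a single sentence observing that the $\TCzero$ upper bound follows from the ability of $\TCzero$ circuits to simulate the arithmetic required by average hard attention. Your write-up simply unpacks this remark in detail (bit-length bookkeeping, iterated addition, counting the argmax set, and division by $|P_i|$), which is exactly the intended argument.
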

The main reason behind the $\TCzero$ upper bound of AHAT is the ability of $\TCzero$-circuits to simulate arithmetic, which is needed in the computation of average hard attention.

For the time being no complete characterization for neither AHAT with PEs nor masked NoPE-AHAT exists. That is, we do not have an extension of Theorem \ref{th:uhat_char} to AHAT. However, it is still possible to specify a logic that expresses languages that can be expressed by AHATs. The logic is called \emph{Counting LTL}, as first defined in \cite{BKLP24}. Intuitively, Counting LTL extends LTL with linear counting terms of the form:
$$ 
C,C' \ := \ c\ (c \in \Z)\, \mid \, \LeftCount{\varphi} \, \mid \, \RightCount{\varphi} \, \mid \, C + C' \, \mid \, C - C', 
$$
and formulas of the form $C \leq C'$, where $C$ and $C'$ are linear counting terms. The term $\LeftCount{\varphi}$
(resp. $\RightCount{\varphi}$) counts the number of times $\varphi$ holds at positions before (resp. after) the one where we are evaluating the formula. The remaining terms and formulas have an intuitive meaning. 

We define the fragment $\CLTL$ of the Counting LTL, which removes all temporal operators of LTL, as well as terms of the form $\RightCount{\psi}$. That is, only terms of the form $\LeftCount{\varphi}$ is allowed. For instance, if $Q_a$ and $Q_b$ are formulas that check whether a position in a word holds symbol $a$ or $b$, respectively, then the $\CLTL$ formula 
$\LeftCount{Q_b} \leq \LeftCount{Q_a}$ checks whether the word belongs to $\text{MAJ}$ (if evaluated on the last position of the word). Similarly, we can define $\text{Dyck}$-1, the language of well-matched parenthesis words over the alphabet consisting of tokens $($ and $)$. The $\CLTL$ that checks for this language over the last position of a word in this alphabet is: 
$$ 
\LeftCount{Q_(} = \LeftCount{Q_)} \, \wedge \, \LeftCount{\LeftCount{Q_)} > \LeftCount{Q_(}} = 0,
$$
where we have used some standard logical abbreviations. It is possible to show
that the Counting LTL can express PARITY, whereas $\CLTL$ cannot express PARITY
\cite{len-gen-huang}.

\begin{theorem}[\cite{BKLP24,YC24}]
    Counting LTL extended with unary numerical predicates is in AHAT with PEs. 
    The fragment $\CLTL$ is expressible by masked NoPE-AHAT.
    \label{th:ahat_char}
\end{theorem}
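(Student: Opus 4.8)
The plan is to mimic, step by step, the proof of Theorem~\ref{th:uhat_char}: argue by structural induction on the formula (and, simultaneously, on counting terms), maintaining the invariant that the transformer associated with a formula $\phi$ (resp.\ a counting term $C$) outputs, at every position $i$ of the input, one designated coordinate holding the truth value of $\phi$ at $i$ (resp.\ an encoding of the numeric value of $C$ at $i$), while leaving untouched all the coordinates produced for the subformulas/subterms lower in the parse tree. Because Counting LTL, like LTL, is a \emph{unary} formalism (the semantics of a formula is a set of positions, that of a term an integer-valued function of positions), this layer-by-layer bookkeeping is precisely what a stack of attention layers and feed-forward layers can carry out. At the end, the value of the whole sentence at the last position is read off by the acceptance test $\langle\vecT,\vecV\rangle>0$ at the end-of-string position.

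First I would dispose of the atoms and of the purely Boolean and temporal material, which is inherited almost verbatim from Theorem~\ref{th:uhat_char}. The base case ($Q_a$, and, in the PE variant, monadic numerical predicates) is handled by an appropriately wide token embedding together with the position embedding; Boolean connectives are piecewise-linear maps acting on a single vector ($\neg\varphi$ as $1-x_\varphi$, $\varphi\vee\psi$ as $x_\varphi+\ReLU(x_\psi-x_\varphi)$, and so on); and the LTL operators $\F$, $\Global$, Until, together with their past mirrors, are realized by hard-attention layers steered by the cosine-based position gadget already described in the sketch of Theorem~\ref{th:uhat_char}. In the masked NoPE fragment $\CLTL$ there are no temporal operators at all, and strict future masking plays the role that PEs play in the PE variant, so this part is essentially vacuous there.

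Next come the counting terms, and here average (rather than unique) hard attention is what does the work. The key observation is that an average hard-attention layer all of whose attention scores are equal --- a uniform layer, in the AHAT[U] sense --- returns at position $i$ the \emph{mean} over the attended positions of whatever value those positions carry; under strict future masking this is $\tfrac{1}{i-1}\sum_{j<i}x_j$. Taking $x_j$ to be the already-computed truth bit of $\varphi$ at $j$ yields $\tfrac{1}{i-1}\LeftCount{\varphi}$ at position $i$, i.e.\ the value of the counting term up to the uniform scaling factor $1/(i-1)$. That scaling factor is itself producible: in the NoPE setting position $1$ is distinguished because strict future masking leaves it with an empty attended set (so a known default value is returned there), and at position $1$ every left-count is $0$ and is handled directly; for $i\ge2$ one recovers $1/(i-1)$ by averaging, over $j<i$, the bit that marks position $1$. (In the PE variant one simply reads $1/i$, and likewise $1/(n+1-i)$ for $\RightCount{\cdot}$, off the position embedding, and $\RightCount{\cdot}$ is treated symmetrically with past- replaced by future-masking.) Linear combinations $C\pm C'$ then become piecewise-linear combinations of these commonly scaled quantities.

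The step I expect to be the main obstacle is the last one: converting a numeric comparison $C\le C'$ back into an honest $\{0,1\}$ bit that can legally be fed into an outer $\LeftCount{\cdot}$ --- exactly what the nested term $\LeftCount{\LeftCount{Q_)}>\LeftCount{Q_(}}$ in the Dyck-$1$ example demands. The naive idea, namely to form the commonly scaled difference $\tfrac{1}{i-1}(C-C')$ and threshold it at $0$ by a feed-forward map, does not work: feed-forward layers are continuous (piecewise linear, hence Lipschitz), whereas $\tfrac{1}{i-1}(C-C')$ has an integer numerator and can therefore lie as close as $\tfrac{1}{i-1}$ to the decision boundary, and since $\tfrac{1}{i-1}\to0$ no single piecewise-linear function can send every value $\le0$ to $0$ and every value $\ge\tfrac{1}{i-1}$ to $1$ for all $i$ at once. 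The resolution --- which is the technical heart of the constructions in \cite{BKLP24,YC24} --- is to let the comparison be carried out by the \emph{argmax} that is already built into average hard attention: one designs the attention scores so that the set of score-maximizing positions (and hence the vector the layer returns) flips according to the sign of $C-C'$, so that the required discontinuity is supplied by the attention mechanism rather than demanded of the feed-forward part. Once each comparison is available as a clean bit, the induction closes and the theorem follows.
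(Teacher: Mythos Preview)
Your proposal is correct and follows the same high-level approach as the paper's (very brief) proof: structural induction on Counting LTL formulas, with the averaging mechanism of AHAT supplying the counting terms. The paper's own proof is only a one-paragraph sketch and does not spell out the comparison step you single out as the main obstacle; your treatment of that step (exploiting the argmax discontinuity of the attention layer rather than the feed-forward part) is a reasonable and more detailed elaboration consistent with the cited constructions.
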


The basic idea behind the proofs of these results is that AHATs allow to compute the uniform average
value among all positions that maximize the attention. This averaging mechanism allows to express many counting properties of interest. The proof is, again, by structural induction on Counting LTL formulas.

We showed that UHAT contains all regular languages in $\ACzero$. We do not know
if this is true for AHAT. That said, Theorem \ref{th:ahat_char} can be used to
show the following slightly weaker result:
\begin{corollary}[\cite{BKLP24}]
    If $\TCzero$ is strictly contained in $\NCone$, then AHAT with PEs contains
    all regular languages in $\TCzero$.
\end{corollary}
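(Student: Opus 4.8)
The plan is to mirror the proof of Corollary~\ref{cor:uhat_reg} one level up in the circuit hierarchy: isolate a logical fragment that (i) captures exactly the regular languages in $\TCzero$ under the hypothesis, and (ii) embeds into Counting LTL with unary numerical predicates, so that Theorem~\ref{th:ahat_char} finishes the job. The first step is to pin down which regular languages lie in $\TCzero$ when $\TCzero \subsetneq \NCone$. Here I would invoke the algebraic theory of regular languages and small circuit classes (see~\cite{straubing-book}): $\ACzero[\text{mod}] \subseteq \TCzero$ (a $\text{MOD}_q$ gate is simulated in $\TCzero$ by iterated addition followed by division by the constant $q$), every regular language with solvable syntactic monoid lies in $\ACzero[\text{mod}]$, and, conversely, every regular language with a \emph{non}-solvable syntactic monoid is hard for $\NCone$ under $\ACzero$-reductions. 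Since $\TCzero$ is closed under such reductions, a non-solvable regular language in $\TCzero$ would force $\NCone \subseteq \TCzero$, contradicting the hypothesis. Hence, under $\TCzero \subsetneq \NCone$, the regular languages in $\TCzero$ are exactly the regular languages with solvable syntactic monoid, equivalently the regular languages in $\ACzero[\text{mod}]$, equivalently (again by~\cite{straubing-book}) the regular languages definable in first-order logic with order and modular-counting quantifiers.

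The second step is to translate this fragment into Counting LTL with unary numerical predicates. A Kamp-style argument (extending~\cite{kamp}) turns first-order logic with order and modular counting into a propositional temporal logic whose operators are the usual LTL ones together with a ``count modulo $q$'' operator, and each such operator is definable in Counting LTL: a condition of the form ``$\varphi$ has held at a number of past positions that is $\equiv r \pmod q$'' is expressed using the counting terms $\LeftCount{\varphi}$ together with iterated addition and equality of terms — this is the mechanism underlying the recorded fact that Counting LTL, unlike $\CLTL$, expresses PARITY~\cite{len-gen-huang}. Evaluating the resulting formula on the last position of the word, following the convention used above for $\text{MAJ}$ and $\text{Dyck}$-$1$, then recognizes the language, and Theorem~\ref{th:ahat_char} yields an AHAT with PEs accepting it, which is the claim.

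The main obstacle is the second step, and specifically the faithful encoding of modular counting inside Counting LTL. The difficulty is that the natural description of the predicate marking every $q$-th $\varphi$-position is self-referential — it is most easily stated in terms of its own past count — so one must either unwind it into a genuine finite formula by induction on the modulus $q$, or route the construction through an explicit deterministic automaton for the group part of the syntactic monoid and simulate its run with counting terms. A secondary, bookkeeping-level point is to check that the Kamp-style translation goes through uniformly in the presence of the extra modular operators and that the ``evaluate on the last position'' convention composes correctly with the temporal operators; both are routine but need care. Everything else — the containment $\ACzero[\text{mod}] \subseteq \TCzero$, closure of $\TCzero$ under $\ACzero$-reductions, the $\NCone$-hardness of non-solvable regular languages, and the appeal to Theorem~\ref{th:ahat_char} — is standard.
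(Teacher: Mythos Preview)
Your proposal is correct and follows essentially the same route the paper implies by placing the statement as a corollary of Theorem~\ref{th:ahat_char}: reduce, via the Barrington/Straubing dichotomy for regular languages (solvable monoid $\Leftrightarrow$ $\ACzero[\text{mod}]$; non-solvable $\Rightarrow$ $\NCone$-hard), to showing that solvable regular languages are definable in Counting LTL with unary numerical predicates, and then invoke Theorem~\ref{th:ahat_char}. You also correctly isolate the one non-routine step --- expressing modular counting of a definable event inside Counting LTL --- and note that the paper already records PARITY as expressible there, so the obstacle is known to be surmountable.
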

%\al{Pablo: I do not know if this is true. Is there a characterization for regular languages in $\TCzero$? This depends on complexity assumptions from what I understand ...}

It turns out that, for the subclass of AHATs with no masking and no PEs, the 
following complete characterization can be proven:
\begin{theorem}[\cite{nope-ahat}]
    NoPE-AHAT recognizes precisely all permutation-invariant languages with semi-algebraic Parikh images.
    \label{th:nope-ahat}
\end{theorem}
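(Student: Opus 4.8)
The plan is to prove the two inclusions of the characterization separately. For the \emph{easy} inclusion --- every language recognized by a NoPE-AHAT is permutation-invariant and has a semi-algebraic Parikh image --- permutation-invariance is immediate: with neither masking nor positional information every seq-to-seq layer commutes with permutations of the sequence, so a one-line induction on layers shows that all positions carrying the same token carry the same vector at every layer, whence acceptance of a word $w$ depends only on its multiset of tokens, i.e. only on $\Parikh(w)$. Thus $L=\{w:\Parikh(w)\in S\}$ with $S=\Parikh(L)$, and it remains to see that $S\subseteq\N^k$ is semi-algebraic ($k=|\Sigma|$). For this I would run the transformer \emph{symbolically} in the count vector $\vec n=(n_1,\dots,n_k)$ and prove by induction on layers the statement: there is a finite partition of $\N^k$ into semi-algebraic cells such that, on each cell, the layer-$\ell$ vector of each of the at most $k{+}1$ token types (including $\EOS$) is a tuple of rational functions of $\vec n$ whose denominators are positive on that cell. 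In the step, the attention scores $\langle A v_s,B v_t\rangle$ are rational; the average-hard-attention argmax set (for each anchor type) changes only across a semi-algebraic boundary --- an equality of two scores, the vanishing of a denominator, or a change in which letters occur --- so after refining there, on each refined cell the aggregated vector $\bigl(\sum_{t\in T}n_t\bigr)^{-1}\sum_{t\in T}n_t v_t$ (with $\EOS$ contributing count $1$) is again rational with positive denominator; finally the piecewise-linear maps $A,B,C$ force one more finite semi-algebraic refinement and send rational tuples to rational tuples. At the top layer the acceptance test $\langle\vecT,\cdot\rangle>0$ becomes, cell by cell, a polynomial inequality once the positive denominator is cleared, so $S$ is a finite union of intersections of semi-algebraic cells with polynomial inequalities --- semi-algebraic.

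For the \emph{converse}, start from a permutation-invariant $L$ whose Parikh image $S$ is semi-algebraic and build a NoPE-AHAT for $L=\{w:\Parikh(w)\in S\}$. First, using $\vec n\in\N^k$ and the fact that integer-coefficient polynomials take integer values, rewrite ``$\vec n\in S$'' as a \emph{positive} Boolean combination (only $\wedge,\vee$) of atoms $p_\ell(\vec n)\ge 0$, pushing negations in via $\neg(p\ge 0)\Leftrightarrow -p-1\ge 0$; let $D:=\max_\ell\deg p_\ell$. The construction then proceeds in gadgets, all using only \emph{uniform} attention (constant, hence zero, scores). Embed $a_i\mapsto e_i$ and let $\EOS$ occupy a separate indicator coordinate; one uniform-attention layer then makes available, at every position, the normalised counts $\rho_i:=n_i/(n{+}1)$ (the average of the $a_i$-indicator) and $1/(n{+}1)$ (the average of the $\EOS$-indicator). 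Operating on a scratch coordinate holding a value $y\in[0,1]$, a \emph{multiply-by-$n_i$} gadget first zeroes $y$ outside tokens of type $a_i$ --- legitimate because for a $\{0,1\}$-valued type-indicator $b$ one has $y\cdot b=\ReLU(y+b-1)$ whenever $y\in[0,1]$ --- and then applies a uniform-attention layer, producing $n_iy/(n{+}1)$; a \emph{divide-by-$(n{+}1)$} gadget instead zeroes $y$ outside the unique $\EOS$ position and then averages, producing $y/(n{+}1)$. Starting from the constant $1$ and chaining $d$ multiply gadgets for the variables of a degree-$d$ monomial $\mu$ followed by $D{-}d$ divide gadgets yields the value $\mu(\vec n)/(n{+}1)^D$; an affine combination over the monomials of $p_\ell$ therefore yields $V_\ell:=p_\ell(\vec n)/(n{+}1)^D$ at every position. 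All these activations stay bounded (in fact each scratch value is in $[0,1]$ until the last affine step), so the $\ReLU$ product identity always applies, and a letter that fails to occur causes no problem since we only ever divide by $n{+}1\ge 1$.

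It remains to evaluate the positive Boolean combination and route the verdict to acceptance. Crucially I would \emph{not} threshold each $V_\ell$ into a Boolean bit; instead, since all the $V_\ell$ share the single positive denominator $(n{+}1)^D$, I realise $\wedge$ by $\min$ and $\vee$ by $\max$ directly on the rational values $V_\ell$ (both are piecewise linear, as $\{x\ge 0\}\cap\{y\ge 0\}=\{\min(x,y)\ge 0\}$ and dually for $\vee$), obtaining one value $V^\ast=p^\ast(\vec n)/(n{+}1)^D$ with $V^\ast\ge 0$ iff $\vec n\in S$, where $p^\ast(\vec n)\in\Z$. At the $\EOS$ position I then form $V^\ast+1/(n{+}1)^D=\bigl(p^\ast(\vec n)+1\bigr)/(n{+}1)^D$ (the additive constant $1/(n{+}1)^D$ being just the degree-$0$ monomial value) and choose $\vecT$ so that acceptance is precisely ``this value is $>0$''; by integrality this holds iff $p^\ast(\vec n)+1>0$ iff $p^\ast(\vec n)\ge 0$ iff $\vec n\in S$. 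The depth is constant (depending on $S$), which is all the statement requires.

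The hard part is the converse, and more precisely the fact that a NoPE-AHAT cannot multiply two of its (position-independent) activation values, nor scale a value up by an unbounded quantity such as $n$ --- every activation is built from convex combinations and fixed piecewise-linear maps and so stays in a fixed bounded range. The two ideas that rescue the construction are (i) that averaging over a \emph{type-restricted pattern of values} covertly computes ``a bounded ratio times a count'', at the cost of one extra factor $(n{+}1)$ in the denominator, and (ii) that the Boolean combination must be taken on the \emph{un-thresholded} rational values sharing the common denominator $(n{+}1)^D$, via $\min/\max$, rather than on per-atom Boolean bits --- the latter would require amplifying a ``gap'' of size $\Theta\bigl((n{+}1)^{-D}\bigr)$ up to a constant, which bounded activations provably cannot do. On the easy side the one genuine bookkeeping subtlety is the finite case split on which letters occur, needed so that the attention denominators $\sum_{t\in T}n_t$ are positive within each semi-algebraic cell.
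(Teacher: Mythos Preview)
The paper does not give its own proof of this theorem: it is stated with a citation to \cite{nope-ahat} and followed only by an explanation of the terminology and some corollaries. There is therefore no in-paper argument to compare your proposal against.

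That said, your proposal is a coherent and essentially correct proof sketch of the result. For the forward direction, the key reduction to at most $|\Sigma|+1$ ``type vectors'' and the layer-by-layer semi-algebraic cell decomposition (refining at argmax ties, at ReLU sign changes, and at the occurrence/non-occurrence of each letter) is the right structure; your remark that denominators must be kept positive on each cell is exactly the bookkeeping that makes clearing denominators sound. For the converse, the two ideas you isolate --- (i) realising ``multiply by $n_i$ then divide by $n{+}1$'' via type-masking followed by uniform averaging, and (ii) evaluating the Boolean combination by $\min/\max$ on the un-thresholded values $p_\ell(\vec n)/(n{+}1)^D$ rather than by per-atom bits --- are precisely the obstacles a NoPE-AHAT must route around, and your integrality trick (adding $1/(n{+}1)^D$ to convert $\geq 0$ into $>0$) is clean. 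One small clarification worth making explicit: after the $\min/\max$ tree, the resulting $p^\ast(\vec n)$ is not a polynomial in $\vec n$, but for each fixed $\vec n$ it coincides with some $p_\ell(\vec n)\in\Z$, which is all the integrality step needs.
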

To explain this theorem, recall (see \cite{KT10}) that the Parikh image 
$\Parikh$ of a language is a mapping from all strings in the language to their 
letter counts. For example, $\Parikh((ab)^*) = \{ (n,n) : n \in \N \}$. Here,
the tuple $(3,3)$ simply denotes that there are 3 occurrences of $a$s and 3
occurrences of $b$'s. Parikh's Theorem shows that context-free languages have
semilinear Parikh images, i.e., they are definable in Presburger Arithmetic. In
contrast, a relation $R \subseteq \N^k$ is \defn{semi-algebraic} if it can be
expressed as a finite union of nonnegative integer solutions to systems of
multivariate polynomial inequalities. That is, the above theorem implies (among
others) that languages $L_k$ of the form $\{ w \in \{a,b\}^* : |w|_a \geq 
(|w|_b)^k\}$,
are expressible by NoPE-AHAT; note that $L_k$ has no semilinear Parikh images
for $k \geq 2$. Interestingly, this also shows that Counting LTL does not
subsume NoPE-AHAT, since the former can only express permutation-invariant
languages with semilinear Parikh images \cite{nope-ahat}.

Theorem \ref{th:nope-ahat} yields immediately undecidability of verification of 
NoPE-AHAT since solvability of Diophantine equations is well-known to be 
undecidable \cite{Mat93}.
\begin{corollary}[\cite{nope-ahat}]
    Checking whether a NoPE-AHAT recognizes a nonempty language is undecidable.
\end{corollary}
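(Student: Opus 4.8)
The plan is to reduce from the (un)solvability problem for Diophantine equations over the nonnegative integers, which is undecidable by the MRDP theorem \cite{Mat93} (the Diophantine sets coincide with the recursively enumerable sets, and this remains true if one insists on solutions in $\N$ rather than $\Z$). So I would fix a multivariate polynomial $p \in \Z[x_1,\ldots,x_k]$ and build, \emph{effectively} from $p$, a NoPE-AHAT $\transformer_p$ whose recognized language is nonempty if and only if the equation $p(x_1,\ldots,x_k) = 0$ has a solution in $\N^k$. Given such a computable map $p \mapsto \transformer_p$, a decision procedure for nonemptiness of NoPE-AHAT would decide Hilbert's tenth problem over $\N$, a contradiction.

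First I would observe that the solution set $S_p := \{ (n_1,\ldots,n_k) \in \N^k : p(n_1,\ldots,n_k) = 0 \}$ is semi-algebraic in the sense of Theorem \ref{th:nope-ahat}: the equality $p = 0$ is equivalent, over $\N^k$, to the pair of inequalities $p \leq 0$ and $-p \leq 0$ (or, if a single inequality is preferred, to $p^2 \leq 0$), so $S_p$ is a finite union of nonnegative integer solution sets of systems of polynomial inequalities. Next, take the alphabet $\ialphabet = \{a_1,\ldots,a_k\}$ (one token per variable) and set $L_p := \{ w \in \ialphabet^* : (|w|_{a_1},\ldots,|w|_{a_k}) \in S_p \}$. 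By construction $L_p$ depends only on the letter counts of $w$, hence is permutation-invariant, and its Parikh image is exactly $S_p$, which is semi-algebraic. Theorem \ref{th:nope-ahat} then guarantees a NoPE-AHAT recognizing $L_p$, and plainly $L_p \neq \emptyset$ iff $S_p \neq \emptyset$ iff $p = 0$ is solvable over $\N$.

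The step that needs the most care — and the only place with real content beyond bookkeeping — is upgrading ``$L_p$ is recognized by some NoPE-AHAT'' to ``$\transformer_p$ is \emph{computable} from $p$''. Theorem \ref{th:nope-ahat} is stated as an exact characterization, so I would appeal to the fact that its completeness direction is proved constructively: from a system of polynomial inequalities defining the Parikh image of a permutation-invariant language, the proof exhibits an explicit NoPE-AHAT (with explicit token embedding, attention layers, and target vector). Tracing that construction on the inequalities $p \leq 0$, $-p \leq 0$ shows the map $p \mapsto \transformer_p$ is effective, which is exactly what the reduction requires. The remaining obligations are routine: that undecidability of Diophantine solvability survives the restriction to $\N$ (e.g. via the four-squares trick relating solutions over $\Z$ and over $\N$), and that rewriting the equality $p=0$ as inequalities keeps us within the semi-algebraic format demanded by Theorem \ref{th:nope-ahat}.
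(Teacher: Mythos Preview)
Your proposal is correct and follows precisely the route the paper takes: the paper simply notes that Theorem \ref{th:nope-ahat} yields undecidability immediately via the undecidability of solvability of Diophantine equations \cite{Mat93}. Your write-up merely unpacks this one-line argument (the encoding of $p=0$ as a semi-algebraic set, the permutation-invariant language over a $k$-letter alphabet, and the effectiveness of the construction), all of which are the natural details behind the paper's remark.
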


\section{Limitations of UHATs and AHATs}
\label{sec:limitations}

Having gone through some body of results in the literature, we now discuss two main limitations of these results.

\subsubsection*{Limitation 1: Soft attention vs. Hard attention.}

As we remarked, practical transformers are based on soft attention.
It is still unclear whether the theory of expressivity of UHATs and AHATs
provides a good approximation of the theory of expressivity of softmax
transformers. For example, we do not know where the expressivity of softmax
transformers exactly lies (e.g. do they subsume UHATs?). That said, it is known
 that PARITY can be captured by a softmax transformer with PEs. Thus, softmax
transformers are not subsumed by UHATs \cite{CC22}. Furthermore, the relationship between
AHAT and softmax transformers has also not been fully delineated (for more on
this, see \cite{simulating,YC24}). 

One subclass of AHAT that seems to be a promising approximation of SMAT
restricts all layers to apply only \emph{uniform} attention. More precisely,
an AHAT layer is uniform if the piecewise linear functions $A,B: \R^r \to \R^r$
ensure that there exists a constant $c$ such that $\langle 
A\vecX,B\vecY\rangle = c$ for all $\vecX,\vecY \in \R^r$. This can happen
esp. when the linear transformation components of $A$ and $B$ map $\vecX$ and
$\vecY$ to 0. The subclass is denoted by AHAT[U]. The following result is
folklore and can easily be shown by noting that $\softmax(s_1,\ldots,s_n)=
\aha(s_1,\ldots,s_n) = [1/n \cdots 1/n]$, whenever $s_1 = \cdots = s_n$, which 
can be guaranteed for uniform AHAT layers.
\begin{proposition}
    Language recognized by AHAT[U] are also recognized by SMAT.
\end{proposition}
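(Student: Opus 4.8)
The plan is a direct layer-by-layer simulation. Given an AHAT[U] transformer $\transformer = (\mu, \tokenem, \vecT)$ with $\mu = f_1, \ldots, f_k$, I build a SMAT $\transformer' = (\mu', \tokenem, \vecT)$ that reuses the token embedding $\tokenem$, the target vector $\vecT$, and any position embedding unchanged, and whose pipeline $\mu' = f_1', \ldots, f_k'$ is obtained from $\mu$ by leaving every feed-forward (piecewise linear) layer exactly as it is, and replacing every attention layer --- which in an AHAT[U] is by definition \emph{uniform} --- by the attention layer built from the \emph{same} triple of piecewise linear functions $A, B, C$, but with the $\softmax$ normalizer in place of $\aha$. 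Well-typedness of $\mu'$ is immediate, since no input or output dimension is touched.

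The correctness claim is that $\transformer$ and $\transformer'$ compute the same function $\ialphabet^* \to \{0,1\}$, and hence recognize the same language. I would prove this by induction on $k$, showing that on every input sequence the two well-typed pipelines produce identical intermediate sequences of vectors after each layer. For a feed-forward layer there is nothing to do, since it is literally the same map. For an attention layer, let $\vecX_1, \ldots, \vecX_n$ be its input sequence (common to both models by the induction hypothesis). Uniformity gives a constant $c$ with $\langle A\vecX_i, B\vecX_j \rangle = c$ for all $i, j$, so the multiset of scores presented to the normalizer at every anchor position $i$ is $\{c, \ldots, c\}$ (of size $n$, or of size $i-1$ in the masked variant). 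On such a constant multiset $\aha$ returns $(1/n, \ldots, 1/n)$ because every position is maximal, and $\softmax$ returns $(e^c/(ne^c), \ldots, e^c/(ne^c)) = (1/n, \ldots, 1/n)$; the two weight vectors coincide, hence $\vecV = \sum_j \vecW(j)\vecX_j$ agrees, hence $\vecY_i = C(\vecX_i, \vecV)$ agrees, and the two output sequences are equal. Propagating through all $k$ layers, the final vector in \eqref{eq:accept} --- the one at position $|w|+1$ --- is identical for $\transformer$ and $\transformer'$, so $\langle \vecT, \vecV\rangle > 0$ holds for one exactly when it holds for the other.

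There is essentially no serious obstacle here, only bookkeeping to get right. If the AHAT[U] uses strict future masking, the anchor position $i = 1$ is fed an empty set of scores; one simply adopts for the SMAT whatever convention the AHAT model already fixes for an empty masked sum (e.g.\ $\vecV := \mathbf{0}$), and the argument above goes through verbatim for $i \geq 2$ with $n$ replaced by $i-1$. Likewise, whatever position embedding or handling of $\EOS$ the original transformer uses is copied unchanged, so no new case analysis arises. The one subtlety worth stating explicitly is that uniformity of an AHAT[U] layer is a property of the \emph{parameters} $A, B$ --- it holds for all inputs, via the condition $\langle A\vecX, B\vecY\rangle = c$ --- which is precisely what lets us equate $\aha$ with $\softmax$ on every input simultaneously; without that global constancy the two normalizers would in general disagree.
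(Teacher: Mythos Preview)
Your proposal is correct and follows exactly the paper's approach: the paper's proof is the one-line observation that $\softmax(s_1,\ldots,s_n)=\aha(s_1,\ldots,s_n)=(1/n,\ldots,1/n)$ whenever $s_1=\cdots=s_n$, which uniformity guarantees. You have simply spelled out the layer-by-layer induction and the bookkeeping (masking, empty sums, position embeddings) that the paper leaves implicit.
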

The above observation was already used in \cite{nope-ahat,YC24} to show the power of 
SMAT:
\begin{proposition}[\cite{YC24}]
    $\CLTL$ languages are recognizable by SMAT.    
\end{proposition}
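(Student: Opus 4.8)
The plan is to route through the preceding Proposition, which states that every language recognized by an AHAT[U] is also recognized by a SMAT; it therefore suffices to show that the masked NoPE-AHAT for $\CLTL$ provided by Theorem~\ref{th:ahat_char} can be taken to have \emph{all of its attention layers uniform}. First I would revisit the structural induction behind Theorem~\ref{th:ahat_char}. Atomic formulas $Q_a$ (and unary numerical predicates, if present) are read off directly from the token and position embeddings and use no attention at all; Boolean combinations, and the linear-combination terms $C+C'$ and $C-C'$, are implemented by piecewise-linear feed-forward maps and again use no attention; so the \emph{only} place attention is invoked is in the evaluation of a counting term $\LeftCount{\varphi}$. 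There, once the indicator $x_\varphi\in\{0,1\}$ is available at every position by the inductive hypothesis, one applies a single strictly-masked attention layer whose query and key maps send every vector to $0$, so that all attention logits over the active window $\{1,\dots,i-1\}$ are equal; consequently both $\aha$ and $\softmax$ return the uniform weight vector, and the layer outputs the average of $x_\varphi$ over $j<i$. Hence no layer ever uses the \emph{relative magnitude} of attention scores, which is exactly the defining property of AHAT[U].

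The second step is then immediate from the observation underlying the preceding Proposition, applied layer by layer: if an attention layer is uniform, then on any input all of its (possibly masked) logits coincide, so $\aha$ and $\softmax$ produce the \emph{same} weight vector $[1/k,\dots,1/k]$ over the $k$ active positions; replacing $\aha$ by $\softmax$ in every layer therefore changes nothing, and the resulting network is a masked NoPE-SMAT recognizing the same language. Finally, masking is eliminated by the standard device (cf.\ the masking-elimination proposition in Section~\ref{sec:model}, or, equivalently, by redoing the construction with finite-image position embeddings as in the UHAT setting of Theorem~\ref{th:uhat_char}), which yields a SMAT.

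The main obstacle is the claim in the first step that $\CLTL$-counting can be realized with uniform attention \emph{only} --- in particular the formulas $C\le C'$, which must produce an exact Boolean value even when nested inside further occurrences of $\LeftCount{\cdot}$. A uniform masked attention layer naturally produces the \emph{length-normalized} quantity $\frac{1}{i-1}\#\{j<i:\ w,j\models\varphi\}$ rather than the raw integer count that $C\le C'$ speaks about, and converting the former to the latter with a \emph{fixed} ReLU network is not possible directly, since the gap $1/(i-1)$ between distinct normalized values shrinks with $i$. The way around this --- and the part one must get right --- is to carry every counting term in the same normalized form throughout the induction, so that a comparison $C\le C'$ becomes an equivalent comparison of normalized terms (with integer constants $c$ carried as $c/(i-1)$, recovered from the $\EOS$ marker by a further uniform attention step), all thresholds then occurring at $0$ and being implementable by $\ReLU$-gadgets; checking that this normalization stays consistent under nesting is the crux. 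It is worth noting that this bookkeeping is essentially already present in the proof of Theorem~\ref{th:ahat_char} --- the genuinely new point in \cite{YC24} is the observation that none of it ever requires non-uniform attention.
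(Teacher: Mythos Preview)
Your plan matches the paper's route exactly: the paper gives no independent argument here but simply combines the preceding proposition (AHAT[U] $\subseteq$ SMAT) with the construction of \cite{YC24}, under which the masked NoPE-AHAT for $\CLTL$ from Theorem~\ref{th:ahat_char} uses only uniform attention layers --- and your identification of the normalization/thresholding bookkeeping as the technical crux is accurate. One minor point: your final masking-elimination step is unnecessary, since the folklore argument already applies verbatim to masked layers (equal scores over the active window make $\aha$ and $\softmax$ coincide), and the paper's notion of SMAT does not exclude masking.
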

\begin{proposition}[\cite{nope-ahat}]
    Permutation-invariant languages with semialgebraic Parikh images are 
    recognizable by SMAT.    
\end{proposition}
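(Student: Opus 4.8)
The plan is to route through the folklore observation that languages recognized by AHAT[U] are also recognized by SMAT, together with Theorem~\ref{th:nope-ahat}. Concretely, I would show that every permutation-invariant language $L \subseteq \ialphabet^*$ with semi-algebraic Parikh image is recognized not merely by some NoPE-AHAT (which Theorem~\ref{th:nope-ahat} already guarantees) but in fact by an AHAT[U], i.e., by a NoPE-AHAT all of whose attention layers are uniform. Granting this, the proposition follows immediately: replacing each uniform average-hard-attention layer by a softmax layer with the same (constant) scores changes nothing, since $\softmax(s,\ldots,s) = \aha(s,\ldots,s) = [1/n\ \cdots\ 1/n]$, so the resulting SMAT computes the same seq-to-seq function at every layer and hence recognizes $L$.

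So the real work is to carry out the expressibility direction of Theorem~\ref{th:nope-ahat} using uniform attention only. Write $\ialphabet = \{a_1,\ldots,a_m\}$ and let $P \subseteq \N^m$ be the (semi-algebraic) Parikh image of $L$, presented as a Boolean combination of polynomial inequalities $Q(|w|_{a_1},\ldots,|w|_{a_m}) \ge 0$. First, a single uniform attention layer applied to a token embedding with one coordinate per letter (plus one for $\EOS$) produces, at every position, the normalized counts $c_i := |w|_{a_i}/(n+1)$ and the reciprocal $r := 1/(n+1)$, where $n = |w|$. Second, each inequality $Q \ge 0$ of degree $d$ is homogenized: substituting $|w|_{a_i} = c_i(n+1)$ and dividing through by $(n+1)^d > 0$ turns $Q \ge 0$ into an equivalent inequality whose variables are the $c_i$ and $r$ — a monomial of $Q$ of total degree $e$ becomes a monomial in the $c_i$ times $r^{d-e}$. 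Third, and this is the crux, such bounded-degree polynomials in $c_1,\ldots,c_m,r$ can be evaluated by alternating uniform attention layers with piecewise-linear layers, never invoking a non-uniform score: the only nontrivial operation is forming a product, and the product of a globally broadcast scalar $v \in [0,1]$ with a local $0/1$ flag $b$ at a position equals $\max(0,\, v - 2(1-b))$, which is piecewise linear; a subsequent uniform average of this flagged value over all positions yields exactly a product of normalized counts (e.g.\ $|w|_{a_i}|w|_{a_j}/(n+1)^2 = c_ic_j$), and iterating builds arbitrary monomials. Finally, Boolean combinations and the sign test are piecewise linear, so the acceptance vector $\vecT$ can read off membership in $P$ at position $n+1$; permutation-invariance is automatic since no positional information is used and every attention layer is uniform.

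The main obstacle is the third step: one must check that nothing in the computation secretly needs a non-uniform comparison — in particular that every occurrence of $|w|_{a_i}$ is safely replaced by $c_i$ (the extra factors of $n+1$ being absorbed by homogenizing against $r = 1/(n+1)$), and that products of counts are obtained purely through the ``multiply a broadcast scalar into a local indicator, then uniformly average'' gadget rather than through attention that peaks on selected positions. Everything else is routine: bookkeeping of dimensions across layers, tracking the $\EOS$ offset $n+1$ versus $n$, and realizing finite unions, intersections, and sign tests via $\ReLU$-based piecewise-linear maps. With the third step in hand, the reduction to the AHAT[U] $\subseteq$ SMAT proposition closes the argument.
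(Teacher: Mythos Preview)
Your proposal is correct and follows exactly the route the paper indicates: the paper does not give a standalone proof of this proposition but simply notes that the folklore inclusion AHAT[U] $\subseteq$ SMAT was used in \cite{nope-ahat} to obtain it, i.e., one strengthens the expressibility direction of Theorem~\ref{th:nope-ahat} to use only uniform attention layers and then transfers to SMAT. Your sketch of that strengthening --- extract normalized counts and $1/(n{+}1)$ via a single uniform layer, homogenize the polynomial inequalities, and build monomials by iterating the ``broadcast scalar times local indicator, then uniformly average'' gadget --- is the right mechanism and fills in precisely the details the paper leaves implicit.
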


\subsubsection*{Limitation 2: Trainability vs. expressibility.} 

Not all expressible languages are efficiently trainable on transformers, i.e.,
by means of Stochastic Gradient Descent (SGD). This applies particularly to
PARITY \cite{CC22}, which seems to be extremely difficult to train on
transformers with any high enough level of accuracy, although it is expressible
by a softmax transformer. This phenomenon was very recently shown to be caused
by \emph{sensitivity}. Loosely speaking, PARITY is sensitive since flipping one
letter (i.e. $a$ to $b$ and vice versa) changes the parity of any string. 
Contrast this with MAJ,
where there are not so many strings that change their memberships in MAJ, after
flipping a letter. This was hypothesized to be the reason why MAJ is efficiently
trainable, whereas PARITY is not.

One interesting upshot of the research effort in understanding trainability is
the so-called \emph{RASP-L conjecture} \cite{raspl}, which states that a concept is likely to
length generalize (i.e. when trained on shorter strings, generalize to longer
strings) precisely whenever it is expressible as a short RASP-L program.
However, as noted by Huang et al. \cite{len-gen-huang}, this is not a precisely formulated
conjecture. The authors postulated a formal version of RASP-L conjecture by
replacing RASP-L with limit transformers and the logic $\CLTL$, for which they 
could successfully 
prove and empirically verify a length generalization theorem. In particular,
this ruled out PARITY (as it is not in $\CLTL$), but admits MAJ. It remains to
be seen if $\CLTL$ subsumes all concepts that admit length generalization on
transformers.

%\subsubsection*{Logic for Efficient Trainability?} 

\section{Conclusions}
\label{sec:conc}

We have discussed several key results employing logic and automata for
understanding what is expressible in/efficiently trainable for transformers.
It must be emphasized that these are only a handful of results in this rapidly
growing field of FLaNN (Formal Languages and Neural Networks); for a more
comprehensive (though less detailed) account of FLaNN, see the excellent survey 
\cite{transformers-survey}. Such a survey is itself quickly becoming out of date
with many emerging topics (e.g. theory of transformers over time series
\cite{uhat_data}). It is our sincere hope that this article could
motivate more researchers in logic and automata, as well as verification and
programming languages, to take up some of the many pressing challenges in FLaNN.

\paragraph{Acknowledgment.} We thank Pascal Bergsträßer, David Chiang, Michael Hahn, Alexander Kozachinskiy, Andy Yang, and Georg Zetzsche for the fruitful discussion. Lin is supported by the European Research Council\footnote{https://doi.org/10.13039/100010663} under Grant No.~101089343 (LASD). Barcel\'o is funded by ANID - Millennium Science Initiative Program -
Code ICN17002, and by CENIA FB210017, Basal ANID.

% ---- Bibliography ----
%
% BibTeX users should specify bibliography style 'splncs04'.
% References will then be sorted and formatted in the correct style.
%
\bibliographystyle{splncs04}
\bibliography{refs}
%

%\begin{thebibliography}{8}
%\bibitem{ref_article1}
%Author, F.: Article title. Journal \textbf{2}(5), 99--110 (2016)

%\bibitem{ref_lncs1}
%Author, F., Author, S.: Title of a proceedings paper. In: Editor,
%F., Editor, S. (eds.) CONFERENCE 2016, LNCS, vol. 9999, pp. 1--13.
%Springer, Heidelberg (2016). \doi{10.10007/1234567890}

%\bibitem{ref_book1}
%Author, F., Author, S., Author, T.: Book title. 2nd edn. Publisher,
%Location (1999)

%\bibitem{ref_proc1}
%Author, A.-B.: Contribution title. In: 9th International Proceedings
%on Proceedings, pp. 1--2. Publisher, Location (2010)

%\bibitem{ref_url1}
%LNCS Homepage, \url{http://www.springer.com/lncs}, last accessed %2023/10/25
%\end{thebibliography}
\end{document}